\newtheorem{observation}{Observation}
\begin{document}
\title{Improved Algorithms for the Evacuation Route Planning Problem}
\author{Gopinath Mishra\inst{1} \and Subhra Mazumdar\inst{2} \and Arindam Pal\inst{2}}
\institute{Advanced Computing and Microelectronics Unit\\
Indian Statistical Institute\\
Kolkata, India\\
\email{gopianjan117@gmail.com}
\and Innovation Labs, TCS Research\\
Tata Consultancy Services\\
Kolkata, India\\
\email{\{subhra.mazumdar,arindam.pal1\}@tcs.com}}

\maketitle

\begin{abstract}
Emergency evacuation is the process of movement of people away from the threat or actual occurrence of hazards such as natural disasters, terrorist attacks, fires and bombs. In this paper, we focus on evacuation from a building, but the ideas can be applied to city and region evacuation. We define the problem and show how it can be modeled using graphs. The resulting optimization problem can be formulated as an integer linear program. Though this can be solved exactly, this approach does not scale well for graphs with thousands of nodes and several hundred thousands of edges. This is impractical for large graphs.

We study a special case of this problem, where there is only a single source and a single sink. For this case, we give an improved algorithm \emph{Single Source Single Sink Evacuation Route Planner (SSEP)}, whose evacuation time is always at most that of a famous algorithm \emph{Capacity Constrained Route Planner (CCRP)}, and whose running time is strictly less than that of CCRP. We prove this mathematically and give supporting results by extensive experiments. We also study randomized behavior model of people and give some interesting results.
\end{abstract}

\section{Introduction}
Emergency evacuation is the process of movement of people away from the threat or actual occurrence of hazards such as natural disasters, terrorist attacks, fires and bombs. In this paper, we focus on evacuation from a building, though the ideas can be applied to city and region evacuation. We are motivated by the evacuation drill that regularly happens in our company Tata Consultancy Services. We are developing a system \textsc{SmartEvacTrak} \cite{ahmed2015smartevactrak} for people counting and coarse-level localization for  evacuation of large buildings. Safe evacuation of thousands of employees in a timely manner, so that no one is left behind, is a major challenge for the building administrators. Time is the main parameter in our model. The travel time between different areas of the building is part of the input and the evacuation time is the output. In the following discussion, we use \emph{\{graph, network\}}, \{\emph{node, vertex}\}, \{\emph{edge, arc}\}, and \{\emph{path, route}\} interchangeably.

We have a building along with its floor plan. Employees are present in some portions (rooms) of the building. There are some \emph{exits} on the floor. Every \emph{corridor} has a \emph{capacity}, which is the number of employees that can pass through the corridor per unit time. Every corridor also has a \emph{travel time}, which is the time required to move from the start of the corridor to the end. The goal is to suggest a feasible route for each employee so that he can be guided to an exit. It must be ensured that at any time the number of employees passing through a corridor does not exceed it's capacity.

A complex building does not provide its occupants with all the information required to find the optimal
route. In an emergency, people tend to panic and do not always follow the paths suggested by the algorithm. They are not given enough time to establish a cognitive map of the building. To address this issue, we need to model the behavior of people in emergency situations. We have proposed a simple randomized behavior model and analyzed it. The expected evacuation time comes out to be quite good. None of the previous works considered any behavior model of people.

\section{Related Work}
In this section, we give a summary of different algorithms for the evacuation route planning problem. Skutella \cite{skutella2009introduction} has a good survey on the network flows over time problem. The monograph by Hamacher and Tjandra \cite{hamacher2001mathematical} surveys the state of the art on the mathematical modeling of evacuation problems. Both these papers give a good introduction and comprehensive treatment to this topic.

The LP based polynomial time algorithm for evacuation problem by Hoppe and Tardos \cite{hoppe1994polynomial} uses the ellipsoid method and runs in $O(n^{6}T^{6})$ time, where $n$ is the number of nodes in the graph and $T$ is the evacuation \emph{egress time} for the given network. It uses time-expanded graphs for the network, where there are $T+1$ copies of each node. The expression for time complexity shows that it is not scalable even for mid-sized networks. Another disadvantage is that it requires the evacuation egress time ($T$) \emph{apriori}, which is not easy to estimate. As the time complexity is a function of $T$, it is not a fully polynomial time algorithm.

One of the earliest algorithms by Lu et al. \cite{lu2005capacity} is Capacity Constrained Route Planner (CCRP). CCRP uses Dijkstra's generalized shortest path algorithm to find shortest paths from any source to any  sink, provided that there is enough capacity available on all nodes and edges of the path. An important feature of CCRP is that instead of a single value which does not vary with time, edge capacities and node capacities are modeled as time series (function of time). Here, we need to update edge and node capacities for each time period. The running time of CCRP is $O(p(m + n \log n))$, ($O(pn \log n)$ for sparse graphs, where $m = O(n)$) and space  complexity is $O((m + n)T)$ ($O(nT)$ for sparse graphs). Here $m$ and $n$ denotes the number of edges and the number of vertices of the graph respectively, $p$ denotes the number of evacuees, and $T$ denotes the evacuation egress time. As space complexity is always at most the time complexity, the running time of CCRP is implicitly dependent on $T$. For sparse graphs, $nT \le pn \log n$, \emph{i.e.}, $T \le p \log n$. So, for sparse graphs the evacuation egress time is at most $O(p \log n)$. The space complexity of $O(nT)$ and unnecessary expansion of source nodes in each iteration are two main disadvantages of CCRP.

To overcome the unnecessary expansion in each iteration, Yin et al. \cite{yin2009scalable} introduced the CCRP++ algorithm. The main advantage of CCRP++ is that it runs faster than CCRP. But the quality of solution is not good, because availability along a path may change between the times when paths are reserved and when they are actually used.

\par Min and Neupane \cite{min2011evacuation} introduced the concept of \emph{combined evacuation time} ($CET$) and \emph{quickest paths}, which considers both transit time and capacity on each path and provides a fair balance between them. Let there be $k$ edge-disjoint paths $\{P_{1},P_{2},\ldots,P_{k}\}$  from  source node $s$ to sink node $t$. Then, the combined evacuation time is given by,
	\begin{equation}
	CET(\{P_{1},P_{2},\ldots,  P_{k}\}) = \left\lceil{ \dfrac{p + \sum_{i = 1}^{k}C_{i}T_{i}}{\sum_{i = 1}^{k}C_{i}}}\right\rceil-1 \label{eqn:cet}
	\end{equation}
	where $C_{i}$ and $T_{i}$ denotes the capacity and transit time of path $P_{i}$ respectively, and $p$ denotes the number of evacuees. Time required to evacuate $p$ people via a path $P$ having transit time $T$ and capacity $C$ is $T + \left \lceil \frac{p}{C} \right \rceil - 1$. So, 
$P_{i}$ is said to be the \emph{quickest path} if and only if $T_{i} + \left \lceil \frac{p}{C_{i}} \right \rceil -1 \leq T_{j} + \left \lceil \frac{p}{C_{j}} \right \rceil -1$, for all $j \in \{1, \ldots, k\} \setminus \{i\}$. 
	
The formula for combined evacuation time not only gives an exact expression for the evacuation time, but it also gives the number of people that will be evacuated on each path. The intuition behind the concept of $CET$ is that paths having lesser arrival time will evacuate more groups. This algorithm is known as QPER (Quickest Path Evacuation Routing). The algorithm finds all edge-disjoint paths between a single source and a single sink and orders them according to the quickest evacuation time (calculated using $CET$) and adds them one by one. The algorithm is fairly simple. It does not use time-expanded graphs and there is no need to store availability information at each time stamp, as only edge-disjoint paths are considered. But their algorithm is limited to single source and single sink evacuation problems. Besides these, the addition of paths is not consistent, i.e., a path added at some point of time may be removed by the algorithm at a latter point of time, in case removal makes the solution better.

\par The solutions produced by CCRP++ and QPER do not follow semantics of CCRP, i.e., the solution quality is not better than that of CCRP. Recently Gupta and Sarda \cite{gupta2014efficient} have given an algorithm called CCRP*, where evacuation plan is same as that of CCRP and it runs faster in practice. Instead of running Dijkstra's algorithm from scratch in each iteration, they resume it from the previous iteration.

Kim et al. \cite{kim2008contraflow} studied the contraflow network configuration problem to minimize the evacuation time. In the \emph{contraflow} problem, the goal is to find a reconfigured network
identifying the ideal direction for each edge to minimize the evacuation time, by reallocating the available capacity. They proved that this problem is NP-complete. They designed a greedy heuristic to produce high-quality solutions with significant performance. They also developed a bottleneck relief heuristic to deal with large numbers of evacuees. They evaluated the proposed approaches both analytically and experimentally using real-world data sets. Min and Lee \cite{min2013maximum} build on this idea to design a maximum throughput flow-based contraflow evacuation routing algorithm.

Min \cite{min2012synchronized} proposed the idea of \emph{synchronized flow} based evacuation route planning. Synchronized flows replace the use of time-expanded graphs and provides higher scalability in terms of the evacuation time or the number of people evacuated. The computation time only depends on the number of source nodes and the size of the graph.

Dressler et al. \cite{dressler2010use} uses a network flow based approach to solve this problem. They use two algorithms: one is based on \emph{minimum cost transshipment} and the other is based on \emph{earliest arrival transshipment}. They evaluate these two approaches using a cellular automaton model to simulate the behavior of the evacuees. The minimum cost approach does not consider the distances between evacuees and exits. It may fail if there are exits very far away. Problems also arise if a lot of exits share the same bottleneck edges. The earliest arrival approach uses an optimal flow over time and thus does not suffer from these problems. But the exit assignment computed by the earliest arrival approach may not be optimal.

There are some previous works which considered the behavior of people in an emergency. L{\o}vs \cite{lovs1998models} proposed different models of finding escape routes in an emergency. Song et al. \cite{AAAI159418} collect big and heterogeneous data to capture and analyze human emergency mobility following different disasters in Japan. They develop a general model of human emergency mobility using a Hidden Markov Model (HMM) for generating or simulating large amount of human emergency movements following disasters.

\begin{figure}
\begin{center}
\includegraphics[scale=0.55]{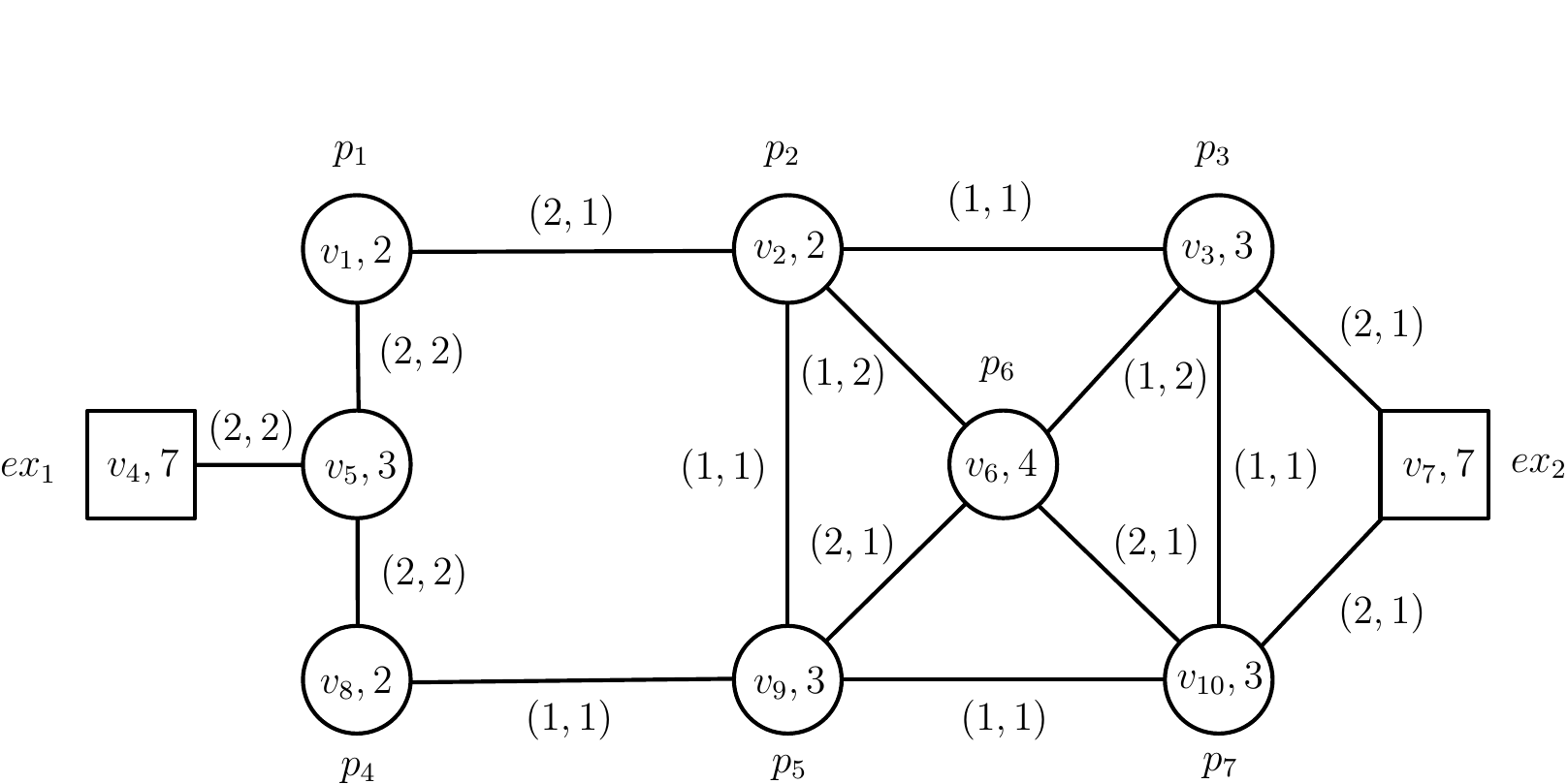}
\caption{A building graph, where vertices represented as squares denote exits. 
The vertex name and capacity are written inside a vertex. 
The edge capacity and travel time are written beside an edge. 
Persons residing on a vertex are specified beside that vertex.}
\label{example-graph1}
\end{center}
\end{figure}

\section{Problem Definition and Model}
The building floor plan can be represented as a graph $G=(V,E)$, where $V$ and $E$ are the set of vertices and edges respectively. The number of vertices and edges are $n$ and $m$ respectively. Nodes represent rooms, lobbies and intersection points and arcs represent corridors, hallways and staircases. Some nodes in the building having significant number of people are modeled as \emph{source} nodes. The exits of a building are represented as \emph{sink} nodes. Each node has a \emph{capacity}, which is the maximum number of people that can stay at that location at any given time and an \emph{occupancy}, which is the number of people currently occupying the location. Here, $p$ is the total number of people who needs to be evacuated.

Each edge has a \emph{capacity}, which is the maximum number of people that can traverse the edge per unit time and a \emph{travel time}, which is the time needed to travel from one node to another along that edge.

Figure \ref{example-graph1} shows a building graph that consists of $10$ vertices and $15$ edges. For each vertex $v$, it's name and the capacity are specified by a pair of the form $(v,c(v))$. A vertex representing an exit is drawn as a square, while the others are drawn as circles. For each edge $e$, the capacity and the travel time are specified on the edge by the pair $(c(e),d(e))$. The goal is to find the exit and the path (route) for each employee, subject to the constraint that the number of source-sink paths passing through an edge does not exceed the capacity of the edge at any unit time interval. The objective function we want to minimize is the total time of evacuation, that is the time at which the last employee is evacuated. Let's define this as the \emph{evacuation time}. In the \emph{quickest flow problem}, we are given a flow value $f$. We want to \emph{minimize} the time $T$ in which a feasible flow of value at least $f$ can be sent from sources to sinks.

\section{The Single Source Single Sink Problem}
In this section, we focus on the single source single sink evacuation (SSEP) problem. In real life, single source single sink evacuation problem has many applications. For example, if all the people are in an auditorium, and there is only one exit in the building, we want to evacuate people as soon as possible, when there is an emergency. Throughout the rest of this paper, $s$ denotes the source and $t$ denotes the sink. Before proceeding further let's have some definitions.

\begin{definition}
\emph{Transit time of a path} is the sum of the transit times of all the edges in $P$ from $s$ to $t$, and is denoted as $T(P)$.
\end{definition}

\begin{definition}
\emph{Destination arrival time of a path} is the time required by a person to move from $s$ to $t$ using path $P$ subject to prior reservations, and is denoted as $DA(P)$. In other words, we can say that $DA(P)$ is the sum of $T(P)$ and any intermediate delay. Note that $DA(P) \geq T(P)$. 
\end{definition}

\begin{definition}
\emph{Capacity of a path} is the minimum of the capacities of all nodes and edges present in the path $P$, and is denoted by $C(P)$.
\end{definition}

\begin{definition}
A node (edge) on a path $P$ is called \emph{saturated} if the capacity of the node (edge) equals the capacity of $P$.
\end{definition}

\begin{definition}
Two paths $P_{1}$ and $P_{2}$ are said to be \emph{distinct} if $V_{1} \neq V_{2}$ or $E_{1} \neq E_{2}$, where $V_{1},V_{2}$ are the set of vertices and $E_{1},E_{2}$ are the set of edges on the paths $P_{1}$ and $P_{2}$ respectively.
\end{definition}
	
\subsection{Limitation of QPER Algorithm for SSEP}
Using the concept of combined evacuation time, Min et al. \cite{min2011evacuation} gave an algorithm $QPER$ for the single source single sink evacuation problem. Their algorithm works well when we have already discovered $k$ edge-disjoint paths. In $QPER$, paths from $s$ to $t$ are added one by one in  ascending order of quickest paths, and new $CET$ is calculated after each path addition. But after addition of a path, the new $CET$ may be less than the transit time of a previously added path. In that case, we have to delete those paths which have higher transit time than the current $CET$. This in turn increases the running time, since the addition of paths is not consistent. 
\par We overcome the above limitations of the algorithm by adding paths in increasing order of transit time in each iteration till the transit time of the currently discovered path exceeds the $CET$ of the  previously added set of paths. Note that, we need not discover all possible paths from source to sink, since unlike $QPER$, if a path is added in any iteration, it will remain till the end. The $CET$ after each iteration will be monotonically non-increasing.

\subsection{Modified algorithm for SSEP when we are given $k$ edge-disjoint paths}
\label{sec:modified}		
Let  $P_{1}, P_{2},\ldots, P_{k}$ be $k$ edge-disjoint paths from $s$ to $t$ in ascending order of their transit time, \emph{i.e.}, $T_{1} \leq T_{2} \leq \ldots \leq T_{k}$. We define, $S_i = \{P_1, \ldots, P_i\}$. We add paths to our set of routes ($\cal R$) in the following fashion.
\begin{enumerate}
	\item ${\cal R} = \{P_{1}\}$.
 	\item $CET = CET(S_{1})$.
 	\item Start with $i = 1$ Execute step $4$ and $5$ till $i \leq k$ and $T_{i + 1} \leq CET$.
 	\item Add path $P_{i + 1}$ to {\cal R}.
 	\item $CET = CET(S_{i + 1})$ and $i\leftarrow i + 1$.
 	\item Return ${\cal R}$.
\end{enumerate}

\begin{lemma}\label{lemma:edge_disjoint_1}
 \textit{If $S_{j} = \{P_{1}, P_{2},\ldots, P_{j}\}$, $j \leq k$ is returned as ${\cal R}$ by the above algorithm then \newline
 1. $T_{l + 1} \leq CET(S_{l}),  1 \leq l < j$\newline
 2. $CET(S_{1}) \geq CET(S_{2}) \geq \ldots \geq CET(S_{j})$\newline 
 3. $CET(S_{j}) \leq CET(S_{l}), j < l \leq k$}.
 \end{lemma}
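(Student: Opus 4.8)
The plan is to observe that the combined evacuation time is, up to the ceiling correction, a \emph{weighted average} of transit times, and that appending one more path replaces this average by a new weighted average that folds in the new transit time. Concretely, write
\[
\mu_i \;=\; \frac{p + \sum_{l=1}^{i} C_l T_l}{\sum_{l=1}^{i} C_l},
\]
so that $CET(S_i) = \lceil \mu_i \rceil - 1$ by (\ref{eqn:cet}). The single fact I would isolate first is that, with $B_i := \sum_{l=1}^{i} C_l$,
\[
\mu_{i+1} \;=\; \frac{B_i\,\mu_i + C_{i+1} T_{i+1}}{B_i + C_{i+1}}
\]
is a convex combination of $\mu_i$ and $T_{i+1}$ (both weights positive, summing to $1$); hence $\mu_{i+1}$ always lies weakly between $\mu_i$ and $T_{i+1}$. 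In particular, $T_{i+1} \le \mu_i$ forces $\mu_{i+1} \le \mu_i$, while $T_{i+1} \ge \mu_i$ forces $\mu_i \le \mu_{i+1} \le T_{i+1}$.

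Parts 1 and 2 then fall out quickly. Part 1 is a transcription of the loop guard: the algorithm appends $P_{l+1}$ exactly when $T_{l+1} \le CET(S_l)$, so every path actually added to a returned set $S_j$ satisfies $T_{l+1} \le CET(S_l)$ for $1 \le l < j$. For Part 2, I would combine Part 1 with the elementary bound $\lceil x \rceil - 1 < x$: for $1 \le i < j$ we get $T_{i+1} \le CET(S_i) = \lceil \mu_i \rceil - 1 < \mu_i$, so the convex-combination fact gives $\mu_{i+1} \le \mu_i$, and monotonicity of $\lceil \cdot \rceil$ yields $CET(S_{i+1}) \le CET(S_i)$. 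Chaining these inequalities produces the non-increasing chain.

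Part 3 is where the real work lies, and I would handle it by induction. If $j = k$ the claim is vacuous, so assume $j < k$; then the loop terminated because $T_{j+1} > CET(S_j) = \lceil \mu_j \rceil - 1$. The crucial step is to clear the ceiling: because the transit times $T_i$ are integers (as are all $CET$ values by definition), the strict inequality $T_{j+1} > \lceil \mu_j \rceil - 1$ upgrades to $T_{j+1} \ge \lceil \mu_j \rceil \ge \mu_j$. I would then prove by induction on $l$, for $j \le l \le k$, the joint statement ``$\mu_l \ge \mu_j$, and if $l < k$ then $T_{l+1} \ge \mu_l$''. The base case $l = j$ is exactly the inequality just derived. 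For the inductive step, $T_{l+1} \ge \mu_l$ and the convex-combination fact give $\mu_l \le \mu_{l+1} \le T_{l+1}$; the left inequality propagates $\mu_{l+1} \ge \mu_j$, while the right inequality together with the ascending ordering $T_{l+2} \ge T_{l+1}$ yields $T_{l+2} \ge \mu_{l+1}$, closing the induction. Finally $\mu_l \ge \mu_j$ gives $CET(S_l) \ge CET(S_j)$ for all $j < l \le k$.

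I expect the main obstacle to be Part 3, for two reasons. First, the integrality argument that converts the strict ceiling inequality $T_{j+1} > \lceil \mu_j \rceil - 1$ into the clean bound $T_{j+1} \ge \mu_j$ is essential and silently uses that the $T_i$ are integers; without this the step fails. Second, one must identify the correct induction hypothesis: tracking only $\mu_l \ge \mu_j$ is insufficient, because the convex-combination step requires $T_{l+1} \ge \mu_l$ at each stage, so the auxiliary invariant $T_{l+1} \ge \mu_l$ has to be carried along, and it is precisely the ascending ordering $T_1 \le T_2 \le \cdots \le T_k$ that keeps this invariant alive.
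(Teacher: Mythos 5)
Your proposal is correct, and it is considerably more substantive than what the paper offers: the paper's entire proof of this lemma is the single sentence ``Directly follows from the algorithm,'' which is a fair description of Part 1 (the loop guard) but not of Parts 2 and 3. Your key observation --- that with $\mu_i = \bigl(p + \sum_{l\le i} C_l T_l\bigr)/\sum_{l\le i} C_l$ one has $\mu_{i+1}$ equal to a convex combination of $\mu_i$ and $T_{i+1}$, so that $\mu_{i+1}$ lies weakly between them --- is exactly the fact needed to make Parts 2 and 3 rigorous, and it is the same mechanism the paper implicitly relies on later (e.g.\ in the proofs of Lemmas \ref{lemma:edge_disjoint_2} and the one following, which manipulate equation (\ref{eqn:cet}) in just this way). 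Your induction for Part 3, carrying the auxiliary invariant $T_{l+1}\ge\mu_l$ alongside $\mu_l\ge\mu_j$ and feeding it with the ascending ordering of the $T_i$, is the right hypothesis and closes correctly. The one caveat you already flag yourself is real: the step $T_{j+1} > \lceil \mu_j\rceil - 1 \Rightarrow T_{j+1}\ge \mu_j$ needs the transit times to be integers. The paper never states this explicitly, but its discrete-time model (integer destination arrival times, the $T_k+\epsilon-1$ accounting in Observation \ref{main_obsv}, and the ceiling in equation (\ref{eqn:cet})) makes it a safe standing assumption; alternatively, Part 3 survives without integrality because when $\lceil\mu_l\rceil - 1 < T_{l+1} < \mu_l$ the new mean $\mu_{l+1}$ stays in the interval $(\lceil\mu_l\rceil-1,\mu_l]$ and the ceiling, hence $CET$, is unchanged. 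In short: correct proof, genuinely filling a gap the paper glosses over rather than mirroring an existing argument.
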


\begin{proof}
Directly follows from the algorithm.
\end{proof}
\begin{lemma}\label{lemma:edge_disjoint_2}
 \textit{ If $S_{j} = \{P_{1},  P_{2},\ldots, P_{j}\}$, $j \leq k$ is returned as ${\cal R}$ by above algorithm then $T_{1} \leq T_{2} \leq \ldots \leq T_{j} \leq CET(S_{j}) \leq CET(S_{j - 1}) \leq \ldots \leq CET(S_{1}) $}
\end{lemma}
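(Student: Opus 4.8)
The plan is to split the chain into three consecutive segments and dispatch each one by a different means. The left segment $T_{1} \leq T_{2} \leq \cdots \leq T_{j}$ needs no argument at all: the paths were indexed in ascending order of transit time at the very start of Section \ref{sec:modified}. The right segment $CET(S_{j}) \leq CET(S_{j-1}) \leq \cdots \leq CET(S_{1})$ is precisely part 2 of Lemma \ref{lemma:edge_disjoint_1}, which I am free to invoke. Hence the only genuinely new link is the single middle inequality $T_{j} \leq CET(S_{j})$, and that is where all the work concentrates; the lemma is really just a repackaging of the previous one together with this one new fact.

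To establish $T_{j} \leq CET(S_{j})$ I would first translate it into a clean fractional statement using the defining formula \eqref{eqn:cet}. Writing $A_{i} = \sum_{r=1}^{i} C_{r}$ and $B_{i} = \sum_{r=1}^{i} C_{r} T_{r}$, the target reads $T_{j} \leq \lceil (p + B_{j})/A_{j}\rceil - 1$. Since $T_{j}$ is an integer and $\lceil x \rceil \geq m$ holds iff $x > m-1$ for integer $m$, this is equivalent to the strict fractional inequality $(p + B_{j})/A_{j} > T_{j}$, i.e. $p + B_{j} > A_{j} T_{j}$. Reducing to this form is the step that makes the rest purely algebraic.

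The key observation is that $S_{j}$ is obtained from $S_{j-1}$ by appending exactly one path $P_{j}$, so $A_{j} = A_{j-1} + C_{j}$ and $B_{j} = B_{j-1} + C_{j} T_{j}$. Now part 1 of Lemma \ref{lemma:edge_disjoint_1}, taken with $l = j-1$, says $P_{j}$ was admitted precisely because $T_{j} \leq CET(S_{j-1})$, which by the same ceiling equivalence means $p + B_{j-1} > A_{j-1} T_{j}$. Substituting the two update relations into the target $p + B_{j} > A_{j} T_{j}$, the $C_{j} T_{j}$ contributions cancel on both sides and it collapses to exactly $p + B_{j-1} > A_{j-1} T_{j}$, which we already have. (Equivalently, this is the mediant property: since $(p+B_{j-1})/A_{j-1} > T_{j} = (C_{j}T_{j})/C_{j}$, inserting $P_{j}$ yields a mediant $(p+B_{j})/A_{j}$ lying above $T_{j}$.) The base case $j=1$ is handled directly, since $(p + C_{1}T_{1})/C_{1} = T_{1} + p/C_{1} > T_{1}$ because $p, C_{1} \geq 1$.

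I expect the main obstacle to be conceptual rather than computational: one must resist reading $T_{j} \leq CET(S_{j})$ off directly from the admission test $T_{j} \leq CET(S_{j-1})$, because $CET$ is non-increasing by part 2 of the earlier lemma, so moving from $S_{j-1}$ to $S_{j}$ \emph{lowers} the value and could in principle drop it below $T_{j}$. The entire content of the argument is that this drop cannot overshoot $T_{j}$, and the cancellation/mediant step is exactly what certifies that. Once $T_{j} \leq CET(S_{j})$ is secured, I simply concatenate the sorted transit times on the left, this middle inequality, and the monotone $CET$ values of Lemma \ref{lemma:edge_disjoint_1} on the right to recover the full chain.
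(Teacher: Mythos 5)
Your proposal is correct and follows essentially the same route as the paper: both reduce the chain to the single new inequality $T_{j} \leq CET(S_{j})$ and derive it from the admission condition $T_{j} \leq CET(S_{j-1})$ of Lemma \ref{lemma:edge_disjoint_1} via the formula in equation (\ref{eqn:cet}). The only difference is presentational --- the paper argues by contradiction ($T_{j} > CET(S_{j})$ forces $T_{j} > CET(S_{j-1})$) while you prove the contrapositive directly and spell out the ceiling/mediant algebra that the paper leaves implicit.
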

\begin{proof}
 Here $T_{1} \leq T_{2} \leq \ldots \leq T_{j}$ and by Lemma \ref{lemma:edge_disjoint_1} $CET(S_{j}) \leq CET(S_{j - 1}) \leq \ldots \leq CET(S_{1})$. So, the only thing remains to prove is $T_{j} \leq CET(S_{j})$. Let by contrary assume that $T_{j} > CET(S_{j})$. By putting formula for $CET(S_{j-1})$ from equation (\ref{eqn:cet})  and then solving we get $T_{j} > CET(S_{j - 1})$. By Lemma \ref{lemma:edge_disjoint_1}, $T_{j} \leq CET(S_{j - 1})$. This is a contradiction.
\end{proof}

\begin{lemma}
 \textit{If $S_{j} = \{P_{1}, P_{2},\ldots, P_{j}\}$, $j \leq k$ is returned as ${\cal R}$ by  above algorithm then $CET(S_{j}) \leq CET(S_{j} \setminus \{P_i\}),$ $ 2 \leq i \leq j$.}
\end{lemma}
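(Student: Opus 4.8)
The plan is to strip off the ceiling in the definition of $CET$ and work with the quantity inside it. Writing
\[
g(\mathcal{S}) = \frac{p + \sum_{P_l \in \mathcal{S}} C_l T_l}{\sum_{P_l \in \mathcal{S}} C_l},
\]
so that $CET(\mathcal{S}) = \lceil g(\mathcal{S}) \rceil - 1$, and using that $x \mapsto \lceil x \rceil - 1$ is non-decreasing, it suffices to prove the sharper, ceiling-free inequality $g(S_{j}) \leq g(S_{j} \setminus \{P_{i}\})$ for every $i$ with $2 \leq i \leq j$. All denominators occurring below are strictly positive, since capacities are positive and each set involved contains $P_{1}$.

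First I would reduce the claim to a single scalar inequality. Set $S = S_{j} \setminus \{P_{i}\}$, let $C = \sum_{P_l \in S} C_l$ and $A = p + \sum_{P_l \in S} C_l T_l$, so that $g(S) = A/C$ and $g(S_{j}) = g(S \cup \{P_{i}\}) = (A + C_{i} T_{i})/(C + C_{i})$. Cross-multiplying (everything is positive) shows that $g(S \cup \{P_{i}\}) \leq g(S)$ is equivalent to $C\,C_{i} T_{i} \leq A\,C_{i}$, i.e. to $T_{i} \leq A/C = g(S)$. Intuitively $g(S \cup \{P_{i}\})$ is the weighted average of $g(S)$ and $T_{i}$ with weights $C$ and $C_{i}$, so it drops below $g(S)$ exactly when $T_{i}$ does. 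Hence the whole statement comes down to proving $T_{i} \leq g(S_{j} \setminus \{P_{i}\})$.

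To establish $T_{i} \leq g(S)$ I would clear the denominator and rewrite it as $\sum_{P_l \in S} C_l (T_{i} - T_{l}) \leq p$. Now split $S = \{P_{1}, \ldots, P_{i-1}\} \cup \{P_{i+1}, \ldots, P_{j}\}$. Because the paths are sorted by transit time, every term with $l > i$ has $T_{i} - T_{l} \leq 0$ and may be discarded, so it is enough to show $\sum_{l=1}^{i-1} C_l (T_{i} - T_{l}) \leq p$. Reversing the algebra of the previous paragraph, this last inequality is precisely $T_{i} \leq g(S_{i-1})$. This is the crux of the argument, and it is exactly what the algorithm's admission rule supplies: part 1 of Lemma \ref{lemma:edge_disjoint_1} (with $l = i-1$, legitimate since $2 \leq i \leq j$) gives $T_{i} \leq CET(S_{i-1})$, and since $CET(S_{i-1}) = \lceil g(S_{i-1}) \rceil - 1 < g(S_{i-1})$ we obtain $T_{i} < g(S_{i-1})$, as required.

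The main obstacle I anticipate is that the comparison is against $S_{j} \setminus \{P_{i}\}$ rather than against the prefix $S_{i-1}$ for which the admission rule was designed: the removed set still contains the later, longer paths $P_{i+1}, \ldots, P_{j}$. The sign analysis above is what resolves this, since those later paths contribute non-positive terms to $\sum_{P_l \in S} C_l(T_{i} - T_{l})$ and therefore only make the inequality easier, letting me fall back on the prefix bound of Lemma \ref{lemma:edge_disjoint_1}. One routine point to double-check is that discarding the ceiling is safe in both directions, which the monotonicity of $\lceil \cdot \rceil - 1$ and the strict bound $CET(\mathcal{S}) < g(\mathcal{S})$ take care of.
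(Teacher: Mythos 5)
Your proof is correct, and it is built on the same algebraic core as the paper's --- the observation that $g(S_j)$ is a capacity-weighted average of $g(S_j\setminus\{P_i\})$ and $T_i$, so removing $P_i$ can only lower the $CET$ when $T_i$ exceeds the average --- but you close the argument by a different route. The paper argues by contradiction and finishes in one line: if $CET(S_j) > CET(S_j\setminus\{P_i\})$ then $T_i > CET(S_j)$, which contradicts Lemma \ref{lemma:edge_disjoint_2} (since $T_i \le T_j \le CET(S_j)$). You instead avoid Lemma \ref{lemma:edge_disjoint_2} altogether: you bound $T_i$ against $g(S_j\setminus\{P_i\})$ directly, discard the non-positive tail terms $C_l(T_i - T_l)$ for $l > i$, and reduce to the prefix inequality $T_i \le CET(S_{i-1})$ supplied by part 1 of Lemma \ref{lemma:edge_disjoint_1} (valid precisely for $2 \le i \le j$). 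What your version buys is self-containment and explicit care with the ceiling (the observation that $CET(\mathcal{S}) = \lceil g(\mathcal{S})\rceil - 1 < g(\mathcal{S})$, and that $x \mapsto \lceil x\rceil - 1$ is monotone, so all reductions survive rounding); the paper's version buys brevity by reusing the already-proved $T_j \le CET(S_j)$, at the cost of leaving the ceiling manipulations implicit. Both are sound; yours is somewhat longer but makes every step checkable without unwinding equation (\ref{eqn:cet}) a second time.
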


\begin{proof}
  We will prove this statement by contradiction. Let $CET(S_{j}) > CET(S_{j} \setminus \{P_i\})$,  which implies  $T_{i}  > CET(S_{j})$ by putting formula for $CET$ from equation-$1$. It is not possible by Lemma \ref{lemma:edge_disjoint_2}. Hence the claim holds.
\end{proof}

\begin{remark}
The addition of paths by the above algorithm is consistent, i.e. if a path is added then it will remain till the end of the algorithm execution.
\end{remark}

\subsection{An Important Observation}
\begin{figure}
\begin{center}
\includegraphics[scale = 0.55]{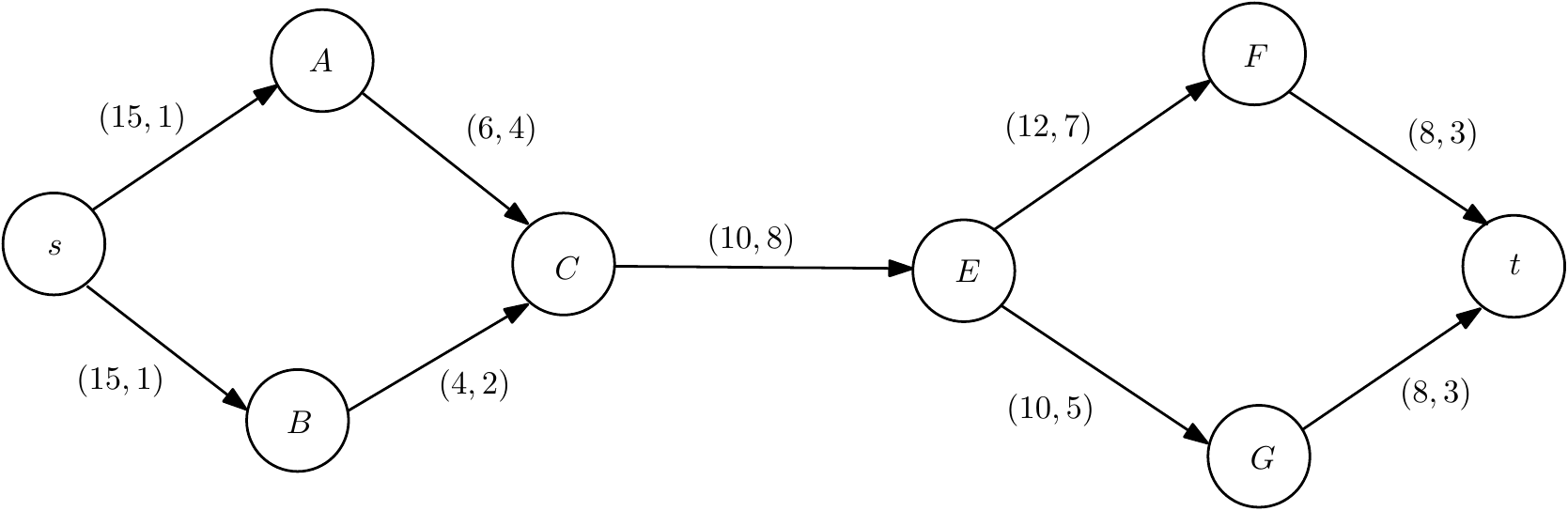}
\caption{An example to show that parallel flows can be sent on non edge-disjoint paths.}
\label{example-graph2}
\end{center}
\end{figure}
In Figure \ref{example-graph2}, ordered pair $(C, T)$ denotes capacity and transit time of an edge.
There are two paths $P_{1}$ and $P_{2}$ between $s$ and $t$.\newline
$P_{1} : s-B-C-E-G-t$, $C(P_{1}) = 4$, $T(P_{1}) = 19$.\newline
$P_{2} : s-A-C-E-F-t$, $C(P_{2}) = 6$, $T(P_{2}) = 23$.\newline
$P_{1}$ and $P_{2}$ are not edge-disjoint, but  common edge $CE$  has capacity of $10$ i.e. $C(P_{1}) + C(P_{2}) = C(CE)$. So, flow can be sent through $P_{1}$ and $P_{2}$ in parallel and we may think like we have two copies of edge $CE$ one having capacity $4$, dedicated for $P_{1}$ and other one having capacity $6$, dedicated for $P_{2}$. We name such set of paths as "virtually edge disjoint". Now it is easy to observe that to apply the formula of combined evacuation time on a set of paths, defined in equation (\ref{eqn:cet}),  the necessary condition is they should be virtually edge disjoint rather than edge disjoint. 

\subsection{Our Algorithm for SSEP}
\par The main idea of the algorithm is to find set of virtually edge disjoint paths one by one and calculate CET  as in section \ref{sec:modified} after each path addition till it satisfies a required condition. 
\par 	 We discover paths one by one in the order of their transit time as follows. We find path $P_{1}$ along with its capacity $C_{1}$ having minimum transit time and decrease capacities of each node and path of $P_{1}$ by its capacity $C_{1}$ permanently and delete saturated nodes and edges. Let's say we have already added paths $\{P_{1},~P_{2},\ldots,~P_{i}\}$, $i \geq 1$, and updated the capacities of nodes and edges along with deletion of required saturated nodes and edges. Note that $P_{1},~P_{2},\ldots,~P_{i}$ are virtually edge disjoint. Hence formula of CET can be applied. In next iteration we discover a path $P_{i+1}$ in residual graph iff $t$ is reachable from $s$ and $i < p$(see line number-4 in algorithm \ref{main_algo}). We add the discovered path $P_{i+1}$ iff $T_{i+1} \leq CET(S_{i})$(see line number-6 in algorithm \ref{main_algo}). As we delete saturated nodes and edges in each iteration when a path is added  we discover paths in maximum of $m+n$ iterations i.e. at max $m+n$ paths and we are not going to discover more than p paths as each path can evacuate atleast one people. So, our algorithm restricts  finding exponential number of possible paths from $s$ to $t$ . More clearly we discover at most $min(m+n, p)$ paths.
\par Here one may think of we are adding paths only based on transit time without considering capacity. Note that  selection of a path for addition is  based on transit time, addition of selected path is done if its transit time less than or equal to  previously calculated CET, which is function of both capacities and transit times of previously added paths. So, our addition of paths to the solution is based on both transit time and capacities of paths implicitly.

\begin{algorithm}
\scriptsize
\caption{Single Source Single Sink Evacuation Route Planner (SSEP)}
\label{main_algo}
\KwIn{A graph $G(V, E)$ representing the network with designated source $s \in V$ and sink $t \in V$. Every node $v \in V$ has an occupancy and maximum capacity. Every edge $e \in E$ has a maximum capacity and transit time. Initially, all persons are in $s$.}
\KwOut{Evacuation route plan for each person.}
\LinesNumbered
\Begin
	{
	Initialize ${\cal R} = \emptyset$ and $CET = \infty$.\\
	Initialize $i\leftarrow 0$.\\
	\While{($t$ is reachable from $s$) and number of discovered paths $\le p-1$ }
	{
		Find the shortest path $P_{i + 1}$ from $s$ to $t$ in $G(V, E)$and let $T_{i+1},C_{i+1}$ be its transit time and capacity respectively.\\
		\If{$T_{i+1} \leq CET$ }
		{
			${\cal R} ={\cal R} \cup \{P_{i+1}\}$.\\
			$CET = CET(S_{i+1})$.\\
			Reduce capacity of each node and each edge of $P_{i + 1}$ by $C_{i+1}$.\\
			$V = V \setminus \{v:v$ is a saturated node of $P_{i + 1}\}$.\\
			$E = E \setminus \{e:e$ is a saturated edge of $P_{i + 1}\}$.\\
		}
		\Else
		{
			break.
		}
	}
    Let ${\cal R} = \{P_{1},P_{2},\ldots,P_{k}\}$. \\
		Send $x_{i}$ persons via $P_{i}, 1 \leq i \leq k$,  where $T_{i} + \lceil \frac{x_{i}}{C_{i}} \rceil - 1 = CET$.
}
\end{algorithm}

\subsection{Running Time Analysis of SSEP}
From the above discussion it is clear that at most $min(m+n, p)$ paths will be discovered and equivalently our algorithm runs for at most $min(m+n, p)$ iterations. As each path discovery can be done in $O((m + n \log n)$ time, using well known Dijkstra algorithm for shortest path, our entire algorithm requires $O(\min(m + n, p)(m + n \log n)$ time. Assuming $m = O(n)$, this becomes $O(\min(n, p) \cdot n \log n)$, which is always at most $O(p n \log n)$. Recall that the time-complexity of CCRP is $O(p n \log n)$. Hence, SSEP always performs faster than CCRP. In real life, the number of evacuees is much larger than the number of vertices, so SSEP runs much faster than CCRP.

\subsection{CCRP Algorithm for SSEP and Some Observations}\label{CCRP}
CCRP \cite{lu2005capacity} is an industry standard algorithm. Many studies have shown that the quality of solution produced by CCRP is better than most heuristic algorithms. We present the CCRP algorithm in simplified form, when there is a single source and a single sink.
\begin{enumerate}
	\item $s$ is added to the priority queue. The nodes in priority queue are ordered based on the distance calculated from $s$ during algorithm execution.
	\item While there are evacuees in $s$, find the path $P$ having minimum destination arrival time from $s$ to $t$ taking the capacity of the various nodes and edges into consideration.
	\item Find  capacity of $P$ and reserve capacity along  the path for a group of size equal to the minimum capacity.
	\item If there are evacuees left at $s$, go to step $2$.
\end{enumerate}
\begin{definition}[\bf{Group Size of a path}]
In each iteration of CCRP one path (say $P_{i}$) from $s$ to $t$ is discovered along with maximum number of people that can be evacuated through that path. This is defined as the \emph{group size} of $P_{i} $ for this iteration.
\end{definition}
For the below sections we denote $T_{i}$, $C_{i}$ as transit time and group size of path $P_{i}$ respectively.

\begin{observation}\label{main_obsv}
\textit{Let's consider execution of single source($s$) single sink($t$) evacuation network by CCRP algorithm. Let $P_{1}, P_{2}, \ldots, P_{k}$ be distinct paths(not necessarily edge-disjoint) from $s$ to $t$  discovered by CCRP such that  $T_{1} \leq T_{2} \leq \ldots  \leq T_{k}$. Here $A_{i}(T)$ is any permutation of $P_{1}(T), P_{2}(T), \ldots, P_{i}(T)$ and $P_{j}(T)$ is the path $P_{j}$ with destination arrival time $T$.\\
Phase $1$: $A_{1}(T_{1}), A_{1}(T_{1} + 1),\ldots, A_{1}(T_{2} - 1)$\\
$\ldots$\\
Phase $i$: $A_{i}(T_{i}), A_{i}(T_{i} + 1),\ldots ,A_{i}(T_{i + 1} - 1), i < k$\\
$\ldots$\\
Phase $k$: $A_{k}(T_{k}), A_{k}(T_{k} + 1),\ldots,A_{k}(T_{k} + \epsilon - 2),A_{k}(T_{k} + \epsilon - 1) $.\\
 Here $\epsilon $ is the maximum number of times any path is discovered in phase $k$. Note that $\epsilon \geq 1$ as $P_{k}$ is discovered at least once. \\
Number of times any path discovered in phase-$k$ is either $\epsilon$ or $\epsilon -1$. It is because of the following argument. By definition of $\epsilon$ there exists a path (say $P_{m}$)  discovered $\epsilon$ number of times. Let $P_{l}$ is a path discovered less than $\epsilon - 1$ number of times. In this case CCRP algorithm would have returned $P_{l}$ instead of $P_{m}$, because using path $P_{l}$ some people can reach destination before or at time $T_{k} + \epsilon - 2$ and  $P_{m}$ has earliest destination arrival time of $T_{k}+\epsilon -1$.\\
Consider the point when all $k$ paths have been returned $\epsilon - 1$ times in phase $k$. Now we may not have enough evacuees such that CCRP will return each path once. We can add some virtual evacuees such that we will use all the paths exactly $\epsilon$ times in phase-$k$ and for simplicity we can say $\epsilon$ is the number of times path $P_{k}$ is returned by CCRP.\\
 Here it is easy to note that evacuation egress time $T_{Evac}^{CCRP} = T_{k} + \epsilon -1$ and it is independent of permutation of paths in any $A_{i}(T)$. So, fix a permutation i.e. \textbf{$A_{i}(T) = P_{1}(T), P_{2}(T), \ldots, P_{i}(T)$}. Fixing up this permutation doesn't affect the solution, but it will make the analysis easier.}
\end{observation}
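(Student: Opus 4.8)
The plan is to prove the observation by induction on the phase index, using the single defining property of CCRP: in every iteration it routes a group along the path of \emph{minimum destination arrival time}, breaking ties arbitrarily. The engine of the argument is a monotonicity lemma that I would establish first. For each distinct path $P_j$ that CCRP ever uses, let $a_j$ denote the destination arrival time of the next group CCRP would route along $P_j$ under the current reservations. I claim $a_j = T_j + r_j$, where $r_j$ is the number of groups already routed on $P_j$; equivalently, each routing on $P_j$ advances $a_j$ by exactly one, because a group of size $C_j$ saturates the bottleneck element of $P_j$ for one unit of time, forcing the next group to depart one time unit later while its transit time $T_j$ stays fixed. This is where the preceding ``virtually edge disjoint'' observation does the work: when two discovered paths share an edge $e$ we have $C(P_1)+C(P_2)\le c(e)$, so $e$ can absorb both groups at the same time step and each path's bottleneck advances independently. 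I expect this lemma to be the main obstacle, precisely because one must rule out a shared edge \emph{coupling} the advancement of two paths, and virtual-edge-disjointness is exactly the condition that decouples them.

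\textbf{Phase structure.} Granting the lemma, order all routing events by arrival time, breaking ties arbitrarily -- this arbitrary tie-break is the permutation $A_i(T)$. I would show by induction that the events with arrival time $T\in[T_i,T_{i+1})$ are exactly $\{P_1(T),\dots,P_i(T)\}$. Indeed, such a $T$ is attainable by $P_j$ for every $j\le i$ (take $r_j=T-T_j\ge 0$), but is \emph{not} attainable by any $P_{j'}$ with $j'>i$, since $a_{j'}\ge T_{j'}\ge T_{i+1}>T$ even for its very first group. By the greedy rule, CCRP therefore delivers, at each integer arrival time in the window, exactly one group on each of $P_1,\dots,P_i$ and none on the later paths; this is precisely phase $i$.

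\textbf{Phase $k$, the quantity $\epsilon$, and egress time.} In the final phase all $k$ paths are active, and I would recover the claim that each path is used either $\epsilon$ or $\epsilon-1$ times directly from greediness: if some $P_l$ were used at most $\epsilon-2$ times while $P_m$ is used $\epsilon$ times, then at the iteration delivering the $\epsilon$-th group of $P_m$ (arrival $T_k+\epsilon-1$) the path $P_l$ still had an available slot of arrival at most $T_k+\epsilon-2$, so CCRP would have chosen $P_l$ first, a contradiction -- exactly the argument sketched in the statement. Padding with virtual evacuees to fill every path up to $\epsilon$ uses only inserts groups at arrival times at most $T_k+\epsilon-1$, leaving the arrival time of the last real group unchanged, whence $T_{Evac}^{CCRP}=T_k+\epsilon-1$.

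\textbf{Permutation independence.} To close, I would observe that for a fixed arrival time $T$ the \emph{multiset} of delivered triples $\{(P_j,\,T,\,C_j):j\le i\}$ is independent of the order $A_i(T)$ in which CCRP happens to examine the equal-arrival-time candidates; reordering merely permutes which candidate is ``found first'' within that batch. Since both the egress time $T_k+\epsilon-1$ and the total number of people evacuated depend only on this multiset and not on its internal ordering, fixing $A_i(T)=P_1(T),\dots,P_i(T)$ is without loss of generality, which completes the argument.
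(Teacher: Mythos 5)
Your proposal is sound and, where it overlaps with what the paper actually argues, it takes the same route: the paper supplies no separate proof of this observation at all --- the only justification it offers is the greedy exchange argument for why every path is used $\epsilon$ or $\epsilon-1$ times in phase $k$, plus the virtual-evacuee padding and the remark that the egress time $T_k+\epsilon-1$ is permutation-independent, and you reproduce all three essentially verbatim. What you add, and what the paper silently assumes when it simply writes down the phase decomposition, is the invariant $a_j = T_j + r_j$ (each use of $P_j$ pushes its next attainable arrival time forward by exactly one unit) together with the induction over arrival times showing that the batch delivered at time $T\in[T_i,T_{i+1})$ is exactly $\{P_1(T),\dots,P_i(T)\}$; this is a genuine improvement in rigor, and your observation that permutation-independence follows because only the multiset of triples at each arrival time matters is the right way to justify fixing $A_i(T)$. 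The one caution is that your monotonicity lemma is only sketched, and you have correctly identified that its proof hinges on the discovered paths being virtually edge disjoint with respect to their group sizes, so that a reservation on a shared element by $P_i$ never delays the next group on $P_j$. The paper does not establish this here either --- it only addresses it later, and informally, in Part 2 of the proof of Lemma \ref{analysis_lemma}, where the reservation of $C_i$ at a shared node is argued to be equivalent to a permanent capacity decrement. So you have located the load-bearing step precisely, but like the paper you leave it at the level of a plausibility argument rather than a proof; a complete treatment would also need to handle the timing subtlety that groups on two paths with the same destination arrival time generally traverse a shared node at \emph{different} clock times, so the competition for capacity there is more delicate than the clean inequality $C(P_1)+C(P_2)\le c(e)$ suggests.
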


\begin{observation}\label{discov_shortestpath_obsv}
\textit{Let $P_{1}, P_{2}, \ldots, P_{k}$ be distinct paths(not necessarily edge-disjoint) from $s$ to $t$  discovered by CCRP such that  $T_{1} \leq T_{2} \leq \ldots  \leq T_{k}$. Here $P_{i}$ is the shortest path discovered after deletion of saturated nodes/edges of $P_{1}, P_{2},\ldots,P_{i - 1}$.}
\end{observation}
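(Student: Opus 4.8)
The plan is to prove the statement by induction on $i$, showing that the $i$-th distinct path discovered by CCRP (with the discovered paths sorted by transit time as $T_1 \le \cdots \le T_k$) coincides with the minimum-transit-time $s$-$t$ path in the residual graph $G_{i-1}$ obtained by deleting the saturated nodes and edges of $P_1,\dots,P_{i-1}$. This is exactly the discovery rule used in line~5 of Algorithm~\ref{main_algo}, so establishing it identifies the path set discovered by CCRP with that discovered by SSEP, which is the real purpose of the observation.

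For the base case $i=1$, I would note that at the very first iteration there are no prior reservations, so for every $s$-$t$ path $P$ we have $DA(P)=T(P)$. Since CCRP selects the path of minimum destination arrival time, its first discovered path has minimum transit time; that is, $P_1$ is the shortest path in $G_0=G$.

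For the inductive step, assume the claim holds for $1,\dots,i-1$, and let $G_{i-1}$ be the residual graph after deleting the saturated elements of $P_1,\dots,P_{i-1}$. Two things must be shown. First, $P_i$ survives in $G_{i-1}$: since $P_i$ is distinct from each earlier $P_l$ and each $P_l$ carries flow only up to its saturated bottleneck, $P_i$ cannot consist solely of elements of the earlier paths, so it avoids their deleted bottlenecks. Second, and this is the heart of the argument, no $s$-$t$ path $Q$ in $G_{i-1}$ satisfies $T(Q)<T_i$. Suppose such a $Q$ existed. Every edge and node of $Q$ is unsaturated by $P_1,\dots,P_{i-1}$ (otherwise it would have been deleted), so $Q$ retains free capacity, and $Q$ is distinct from every $P_l$, $l<i$ (each $P_l$ uses a deleted bottleneck, which $Q$ does not). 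Hence, at the moment $P_i$ is first returned, $Q$ is a path with available capacity and strictly smaller transit time, so $DA(Q)<DA(P_i)$; by the greedy minimum-destination-arrival-time rule CCRP would have returned $Q$ before $P_i$, contradicting that the distinct paths discovered so far are exactly $P_1,\dots,P_{i-1}$ with transit times at most $T_i$.

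The main obstacle I anticipate is reconciling CCRP's time-indexed (time-series) capacity reservations with the static ``delete the saturated element'' view of the residual graph: an element saturated at one time step may regain free capacity at a later one. To handle this I would lean on the phase decomposition of Observation~\ref{main_obsv}, which guarantees that the distinct paths CCRP uses are precisely $P_1,\dots,P_k$ in non-decreasing transit-time order and that a path ceases to be chosen only once a bottleneck it contains is exhausted. Making precise that the relevant bottleneck behaves as a permanently deleted element at the single time step at which $P_i$ is first returned---so that the free-capacity claim about $Q$ is valid exactly there---is the delicate point, and I would argue it by tracking capacities only at that time step rather than over the whole horizon.
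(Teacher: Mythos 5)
First, note that the paper offers no proof of this statement at all: it is labelled an ``observation'' and asserted without justification, so your attempt cannot be measured against an official argument --- you are supplying one where the authors supplied none. Your induction skeleton (base case via $DA(P)=T(P)$ at the first iteration; inductive step split into ``$P_i$ survives in the residual graph $G_{i-1}$'' and ``no shorter $Q$ survives in $G_{i-1}$'') is the natural way to attack it, and the exchange argument for the second half is sound in outline.

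There are, however, two concrete gaps. First, your justification that $P_i$ survives in $G_{i-1}$ does not work: under the paper's definition, two paths are \emph{distinct} merely when their vertex sets or edge sets differ, so $P_i$ being distinct from each earlier $P_l$ in no way prevents $P_i$ from passing through the saturated bottleneck of some $P_l$; ``it avoids their deleted bottlenecks'' is a non sequitur. The real reason, if it holds, must come from the time-indexed reservations --- one would have to show that a path containing a fully reserved element at the relevant time instant cannot be returned by CCRP with destination arrival time $T_i$ and no intermediate delay. Second, you correctly identify the reconciliation of CCRP's time-series reservations with the static ``delete the saturated element'' picture as the delicate point, but you then defer it rather than resolve it, and that is precisely where all the content of the observation lives. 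Both halves of your inductive step need the quantitative fact that at any single time instant the total reservation on a non-saturated element $x$ is at most $\sum_{l:\, x \in P_l} C_l$, which is strictly below the capacity of $x$ --- the bookkeeping the paper itself carries out only later, in Part 2 of the proof of Lemma~\ref{analysis_lemma}, via the quantities $t_x^l$. Without that accounting, the claim that your hypothetical shorter path $Q$ ``retains free capacity'' at arrival time $T(Q)$ is an assertion, not a deduction. As written, then, the proposal is a reasonable plan but establishes neither direction of the inductive step.
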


\begin{remark}
Algorithm \ref{main_algo} finds a path even after we have deleted saturated nodes and edges of all previously discovered path, if it satisfies the conditions given on line numbers 4 and 6.
\end{remark}

\begin{observation}\label{group_obsv}
\textit{Let's consider the sequence of paths as in Observation \ref{main_obsv} with the fixed permutation of each $A_{i}(T)$ as explained. A path $P_{i}$ may be returned in many iterations of CCRP. Group size returned in all iterations are equal possibly except last time when $P_{i}$ is discovered(in phase $k$) in case we don't have enough evacuees left at $s$. This type of situation might happen only once as we are dealing with single source single destination network and it can happen in phase $k$ after or while discovery of $P_{k}$ for the first time. In such cases we can add some virtual evacuees to $s$ so that group size of a path remains same in all iterations. It will not affect evacuation egress time but it will make the analysis easier.}
\end{observation}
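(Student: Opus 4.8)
The plan is to reduce the claim to a single time-translation invariance property of the residual capacities that CCRP maintains, and then to dispose of the two boundary effects -- the source running dry and the top-up by virtual evacuees -- separately.

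First I would make precise what the group size of a single dispatch of $P_i$ is. Fix the canonical permutation $A_i(T)=P_1(T),\ldots,P_i(T)$ supplied by Observation \ref{main_obsv}. Under this ordering each occurrence of $P_i$ corresponds to a group that arrives at a distinct destination arrival time $T\in\{T_i,T_i+1,\ldots\}$, so that its departure time $\tau=T-T_i$ runs through $0,1,2,\dots$. For a node or edge $x$ on $P_i$ let $\delta_x$ be its cumulative transit time from $s$ along $P_i$; the dispatch leaving at $\tau$ consumes the capacity of $x$ only at the single time slice $\tau+\delta_x$. Hence the group size of that dispatch equals $\min_{x\in P_i} r_\tau(x)$, where $r_\tau(x)$ is the residual capacity of $x$ at time $\tau+\delta_x$, unless it is further truncated by the number of evacuees still present at $s$.

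The heart of the argument is that $\min_{x\in P_i} r_\tau(x)$ is independent of $\tau$, which I would prove by induction on the path index $i$. The base case $i=1$ is immediate, since successive dispatches of $P_1$ hit pairwise disjoint future time slices and meet no competing reservation, so the bottleneck is the full capacity throughout. For the inductive step I would use the following translation observation: the dispatch of $P_i$ at $\tau$ and at $\tau+1$ touch each $x$ exactly one slice apart, and every reservation competing with $P_i$ on slice $\tau+\delta_x$ is made either by an earlier dispatch of $P_i$ itself (which lives strictly in the past and so never lands on that slice) or by some $P_{i'}$ with $i'<i$. Because the fixed permutation dispatches each active $P_{i'}$ exactly once per unit increase of the destination arrival time, the departure time of the unique dispatch of $P_{i'}$ that could contend with $P_i$ on $x$ also advances by one unit when $\tau$ does, and by the induction hypothesis it carries a constant group size. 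Therefore the total load on slice $\tau+\delta_x$ is the same for every $\tau$, so $r_\tau(x)$ and hence $\min_{x} r_\tau(x)$ are constant in $\tau$, equal to the value $C_i$ fixed at the first discovery of $P_i$.

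It then remains to treat the single source and the virtual evacuees. Since all people start at $s$ and $s$ is never replenished, the count remaining at $s$ is monotone non-increasing, so a dispatch of $P_i$ can drop below $C_i$ only through supply exhaustion; a full round of phase $k$ sends $\sum_i C_i\ge$ the remaining supply, so this truncation occupies one terminal stretch of phase $k$ and touches only the final dispatch of each path. Padding $s$ with just enough virtual evacuees to fill those deficient final groups back up to $C_i$ restores uniform group sizes, and because the added evacuees are absorbed by dispatches already scheduled to arrive no later than $T_k+\epsilon-1$, no new destination arrival time is created and the egress time $T_{Evac}^{CCRP}=T_k+\epsilon-1$ of Observation \ref{main_obsv} is unchanged.

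I expect the main obstacle to be the inductive translation step, and specifically the transient slices at the two ends of each path's active interval. I must verify that the first dispatch of $P_i$, which defines $C_i$, already sees the fully developed reservation pattern of $P_1,\ldots,P_{i-1}$, so that no early dispatch enjoys a spuriously larger bottleneck, and symmetrically that the only under-filled slices are the terminal ones removed by the virtual-evacuee padding. Making the one-unit shift of the time-expanded capacity bookkeeping rigorous, rather than merely plausible, is where the real work lies.
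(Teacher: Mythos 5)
Your plan is considerably more ambitious than what the paper actually does: the paper gives no proof of this observation at all --- it is asserted, and the only place the substantive claim (that each re-discovery of $P_i$ is throttled by exactly the same set of reservations) gets argued is the informal ``dedicated capacity'' discussion inside Part~2 of the proof of Lemma~\ref{analysis_lemma}. Your reduction to a time-translation invariance of the residual capacities under the fixed permutation of Observation~\ref{main_obsv} is the right way to make that rigorous, and your handling of the two boundary effects (supply exhaustion and virtual padding) matches the paper's intent.

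However, the proof as proposed has a genuine gap, and it is exactly the one you flag at the end but do not close. Both your base case and your inductive step silently assume an alignment property of the discovered paths at a shared node or edge $x$: writing $t_x^{(i)}$ for the transit time from $s$ to $x$ along $P_i$, you need, for $i' < i$, both the suffix inequality $T_i - t_x^{(i)} \ge T_{i'} - t_x^{(i')}$ and the prefix inequality $t_x^{(i)} \ge t_x^{(i')}$. The suffix inequality is what guarantees that the unique dispatch of $P_{i'}$ landing on the slice used by $P_i(T)$ comes from a round $T' \le T$, hence is already reserved when $P_i(T)$ is discovered (and, symmetrically, that no earlier-round dispatch of a higher-indexed path occupies a slice needed by $P_1(T)$ --- so even your ``immediate'' base case depends on it). The prefix inequality is what guarantees that this contending dispatch already exists in round $T_i$, so that the first dispatch of $P_i$, which defines $C_i$, sees the fully developed reservation pattern rather than a spuriously large bottleneck that later shrinks. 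Both inequalities do hold --- they are sub-path optimality statements for the generalized Dijkstra search CCRP runs (if either failed, splicing the prefix of one path onto the suffix of the other would yield a smaller destination arrival time, contradicting the order of discovery) --- but this must be argued in the residual, reservation-laden time-expanded graph, and it is the actual mathematical content of the observation. Until that is written out, the induction does not go through; with it, your proof is complete and strictly stronger than anything in the paper.
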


\begin{remark}
We can represent each path discovered by CCRP as an ordered pair of path and its group size. Algorithm \ref{main_algo} returns a path with maximum number of people who can travel by that path at any time. As each path is discovered only once, we can represent each path along with the capacity as an ordered pair.
\end{remark}

\subsection{Analysis of Algorithm \ref{main_algo}}
\begin{lemma}\label{analysis_lemma}
Let $(P_{1}, C_{1}), (P_{2},C_{2}), \ldots, (P_{k}, C_{k})$ be distinct paths (not necessarily edge-disjoint) from $s$ to $t$ in order of their transit time discovered by CCRP.
\begin{enumerate}
\item Number of iterations that will return path $P_{i} $ is $T_{k} - T_{i} + \epsilon $, $1 \leq i \leq k $, where $\epsilon$ denotes number of iterations that returns path $P_{k}.$
\item Number of iterations that will return path $P_{i}$ before phase $j$ is  $T_{j} - T_{i}$, where $i \leq j \leq k$.
\item The same paths will be returned by Algorithm \ref{main_algo}, and $T_{1} \leq T_{2} \leq \ldots \leq T_{k}$.
\end{enumerate}
\end{lemma}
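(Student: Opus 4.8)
The plan is to handle the three parts separately: the first two are pure bookkeeping on the phase decomposition supplied by Observation~\ref{main_obsv}, whereas the third is where CCRP and Algorithm~\ref{main_algo} must be tied together. For part~1 I would read the counts straight off the phase structure. In phase $j$ with $j<k$ the distinct destination arrival times are $T_j, T_j+1,\ldots,T_{j+1}-1$, i.e. exactly $T_{j+1}-T_j$ unit slots, and in each slot the fixed permutation $A_j(T)=P_1(T),\ldots,P_j(T)$ returns each of $P_1,\ldots,P_j$ once. Hence a fixed path $P_i$ is returned once per slot in every phase $j\ge i$, and it first appears in phase $i$ (it is absent from $A_{i-1}$). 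Summing over phases $i,\ldots,k-1$ telescopes to $\sum_{j=i}^{k-1}(T_{j+1}-T_j)=T_k-T_i$, and phase $k$ contributes its $\epsilon$ slots, giving the total $T_k-T_i+\epsilon$. Part~2 is the same computation truncated before phase $j$: summing the per-phase counts over $l=i,\ldots,j-1$ telescopes to $T_j-T_i$. The only facts needed are that each phase is a contiguous block of unit-width arrival-time slots and that every available path is returned exactly once per slot; both are exactly the content of Observation~\ref{main_obsv}, together with the virtual-evacuee normalization of Observation~\ref{group_obsv} that makes the last phase uniform.

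For part~3 I would argue, by induction on the path index, that CCRP and Algorithm~\ref{main_algo} perform an identical greedy discovery. The base case is immediate: both start from the full graph $G$ and pick a minimum transit-time path from $s$ to $t$, so they agree on $P_1$ once a common tie-breaking rule is fixed. For the inductive step, suppose both have discovered $P_1,\ldots,P_{i-1}$ and, crucially, have produced the same residual graph. By Observation~\ref{discov_shortestpath_obsv} the next distinct path CCRP returns is the shortest $s$-$t$ path after the saturated nodes and edges of $P_1,\ldots,P_{i-1}$ are removed; line~5 of Algorithm~\ref{main_algo} selects the shortest path in precisely this residual graph, and lines~9--11 perform the same capacity reduction and saturated-element deletion. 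Hence both select the same $P_i$ and hand on the same residual graph. Monotonicity $T_1\le T_2\le\cdots\le T_k$ then follows from subgraph containment: each residual graph is a subgraph of the previous one, so any $s$-$t$ path surviving into step $i+1$ already existed in step $i$ and therefore has transit time at least $T_i$.

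The step I expect to be the main obstacle is showing that Algorithm~\ref{main_algo} stops after exactly the same $k$ paths CCRP discovers, neither earlier nor later, since the two algorithms halt for different reasons --- CCRP when all evacuees are routed, SSEP when $T_{i+1}>CET(S_i)$ --- so the equivalence is not purely mechanical. I would close this gap by relating SSEP's threshold to the egress time $T_{Evac}^{CCRP}=T_k+\epsilon-1$. Using part~1, the total evacuee count is $p=\sum_{i=1}^{k}(T_k-T_i+\epsilon)C_i$, and substituting this into the definition of $CET$ from equation~(\ref{eqn:cet}) should yield $CET(S_k)=T_k+\epsilon-1=T_{Evac}^{CCRP}$. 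Combined with the monotone non-increasing behaviour of $CET$ established in section~\ref{sec:modified} (Lemma~\ref{lemma:edge_disjoint_2}), this forces $T_{i+1}\le CET(S_i)$ for every $i<k$ while any would-be $(k+1)$-th distinct path has $T_{k+1}\ge T_k+\epsilon>CET(S_k)$ (otherwise CCRP would have discovered it within phase $k$), so SSEP admits exactly $P_1,\ldots,P_k$. This is precisely why parts~1 and~2 are proved first: their iteration counts are what make the $CET$ evaluation tractable.
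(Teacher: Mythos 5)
Your treatment of parts 1 and 2 matches the paper (which simply cites Observation~\ref{main_obsv}), and your inductive skeleton for part 3 is the right shape. But the step you yourself flag as the main obstacle --- showing SSEP's acceptance test $T_{i+1}\le CET(S_i)$ passes for every $i<k$ --- is argued circularly. You propose to compute $CET(S_k)=T_k+\epsilon-1$ and then invoke the monotone non-increase of $CET$ from Lemma~\ref{lemma:edge_disjoint_2} to push this back to every prefix. But that monotonicity is itself \emph{conditional} on the paths having been accepted by the modified algorithm, i.e.\ on $T_{i+1}\le CET(S_i)$ already holding; indeed $CET(S_{i+1})\le CET(S_i)$ is essentially equivalent to $T_{i+1}\le CET(S_i)$, so you are assuming what you need to prove. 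The paper avoids this by arguing each step directly from the iteration counts of part 2: before $P_{i+1}$ is first discovered (in phase $i+1$), path $P_l$ has been returned $T_{i+1}-T_l$ times with group size $C_l$, so $\sum_{l=1}^{i}C_l(T_{i+1}-T_l)<p$ because evacuees remain, and rearranging this inequality via equation~(\ref{eqn:cet}) gives exactly $T_{i+1}\le CET(S_i)$. Your closing remark that the iteration counts ``make the $CET$ evaluation tractable'' points at the right ingredient, but you apply it only at $S_k$ rather than at each prefix $S_i$, which is where it is actually needed.

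The second gap is that the lemma is about ordered pairs $(P_i,C_i)$, and you never establish that the \emph{capacity} with which SSEP discovers $P_{j+1}$ equals CCRP's group size $C_{j+1}$. You assert that both algorithms ``hand on the same residual graph,'' but CCRP maintains time-indexed reservations, not a static residual graph, and its group size is the bottleneck of \emph{available} (time-dependent) capacity. The paper devotes the bulk of its inductive step to exactly this point: for any node or edge $x$ shared by $P_{j+1}$ and an earlier $P_i$, the fixed permutation in Observation~\ref{main_obsv} guarantees that $P_i(T_{j+1})$ has already reserved $C_i$ units at $x$ at time $t_x^{j+1}$ before $P_{j+1}$ is discovered, so that reservation behaves like a permanent decrement of $x$'s capacity by $C_i$ --- which is precisely what SSEP does. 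Without this argument the induction does not close: the $CET(S_i)$ values used in the stopping test depend on the $C_i$, so matching the paths without matching the capacities is not enough.
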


\begin{proof}
Parts (1) and (2) directly follows from Observation \ref{main_obsv}. For part (3), by induction we can prove that algorithm \ref{main_algo} finds each path $P_{j}, 1 \leq j \leq k$ with available capacity $C_{j}$.
 
\textbf{Base case:} $j = 1$ i.e. $(P_{1},C_{1})$ is added by Algorithm \ref{main_algo}. This is obvious.

\textbf{Inductive Step:} Suppose paths $(P_{1},C_{1}),\ldots,(P_{j},C_{j}), 1 \leq j < k$ have been added by Algorithm \ref{main_algo}. We have to prove that Algorithm \ref{main_algo} will also add $(P_{j+1},C_{j+1})$.
 
 \textbf{Part 1:} From Observation \ref{discov_shortestpath_obsv}, $P_{j +1 }$ is the shortest path from $s$ to $t$ in residual graph i.e. if we delete saturated node(s) and/or edge(s) of the paths $P_{1},P_{2},\ldots,P_{j}$. Algorithm \ref{main_algo} also adds paths one by one after deleting saturated node(s) and/or edges(s) of previously discovered paths. So,  structure of the graph remains same after addition of these $j$ paths both in CCRP and Algorithm \ref{main_algo}. So, $P_{j + 1}$ is also the best path w.r.t. transit time in residual graph according to Algorithm \ref{main_algo}. As $P_{j + 1}$ is the best path in residual network either no paths will be added or $P_{j + 1}$ will be added to set of routes in Algorithm \ref{main_algo}.
 
 Let by contrary assume that Algorithm \ref{main_algo} doesn't add path $P_{j + 1}$ i.e. Algorithm \ref{main_algo} does not add any path. Clearly it may happen due to one of the two reasons i.e. either $t$ is not reachable from $s$ or number of paths discovered $= p$(line number-$4$ in Algorithm \ref{main_algo}) or $T_{j + 1} > CET(S_{j})$(line number-$6$ in Algorithm \ref{main_algo}).\\
 \textbf{Case 1(a):} ($t$ is not reachable from $s$)\\
 As CCRP is able to find path $P_{j + 1}$, $t$ is reachable from $s$. Contradiction!\\ 
 \textbf{Case 1(b):} (Number of paths discovered $=p$)\\
 It is clear from CCRP Algorithm given in section \ref{CCRP} that it does not discover more than $p$ paths as in each path at least one people will be evacuated. As CCRP finds path $P_{j+1}$, number of  paths discovered before discovery of $P_{j+1}$ by Algorithm \ref{main_algo} can't be more than $p-1$.\\
 \textbf{Case 2:} ( $T_{j + 1} > CET(S_{j})$)\\
 Just come back to the point when CCRP adds path $(P_{j + 1}, C_{j + 1})$ for the first time. It can happen only in phase $j + 1$. From  Lemma \ref{analysis_lemma}  $P_{i}$ is returned  in $T_{j + 1}-T_{i} , 1 \leq i \leq j < k$, iterations  before phase $j + 1$. As $P_{j + 1}$ discovered in phase $j + 1$ for the first time total number of people evacuated through $P_{i}$ before discovery of $P_{j + 1}$ is at least $T_{j + 1}-T_{i}$. As group size of path $P_{i}$ is $C_{i}$, total number of people evacuated before discovery of $P_{j + 1}$ is at least  $\sum_{i = 1}^{j}C_{i}(T_{j + 1} - T_{i})$. As CCRP adds the path $P_{j + 1}$ we can say that still there are people to be evacuated. Also from Observation \ref{group_obsv} virtual evacuees are added while or after addition of path $P_{k}$. So, total number of people evacuated before discovery of $P_{j + 1}$ is strictly less than $p$. Mathematically $\sum_{i = 1}^{j}C_{i}(T_{j + 1} - T_{i}) < p$, which implies $T_{j + 1} \leq CET(S_{j}) $. Contradiction!\\
 \textbf{Part 2:}
 Now one thing remains to prove is available capacity of the path $P_{j + 1}$ returned by Algorithm \ref{main_algo}  is also $C_{j + 1}$. If $P_{j + 1}$ doesn't share any node or edge with previously discovered path we are done. So, assume that there is some node or edge $x$  which is common to both $P_{j + 1}$ and some $P_{i}$, $1 \leq i \leq j$. Here we argue considering $x$ as a node and argument for $x$ as an edge is same. Let $t_{n}^{k}$ denotes time required to travel from $s$(source) to node $n$ via path $P_{k}$ with out intermediate delay. Observe that $t_{x}^{j+1} \geq t_{x}^{i}$. From observation $1$  $P_{j + 1}$ is discovered in phase $j + 1$ for the first time by CCRP algorithm. In phase $j + 1$ consider $A_{j+1}(T_{j + 1})$. $P_{i}$ has been discovered once before discovery of $P_{j + 1}$   with its destination arrival time $T_{j + 1}$ i.e. it has made a reservation of $C_{i}$ at $x$ for the time instance $t_{x}^{j+1}$ at node $x$. Now arrival time of evacuees via $P_{j + 1}$ to $x$ is also $t_{x}^{j + 1}$. At $t_{x}^{j+1}$ we can not use that capacity of $C_{i}$ for evacuees routing via $P_{j + 1}$. In other words as if  node $x$ has dedicated capacity of $C_{i}$ at time $t_{x}^{j + 1}$ for evacuees routing via $P_{i}$ and that can't be used by evacuees routing via  $P_{j + 1}$. Here we have not assumed anything on $i$ and $x$. For each such $i$ and $x$, $P_{j + 1}$ can't use the capacity of $C_{i}$ at time $t_{x}^{j+1}$ at node $x$. It is equivalent to permanently decrementing the capacity of such $x$'s by corresponding $C_{i}$, because from observation $1$ whenever $P_{j + 1}$ is discovered prior to that a reservation of $C_{i}$ must have been done at common node $x$(of $P_{i}$ and $P_{j + 1}$) by path $P_{i}$. Now come back to Algorithm \ref{main_algo}. By induction each path $P_{i}$, $i \leq j$ is returned with capacity $C_{i}$. We find path $P_{j + 1}$ by decrementing the capacity of each path by $C_{i}$ permanently. So, just before addition of $P_{j + 1}$ structure of the graph remains same w.r.t. capacity both in CCRP and Algorithm \ref{main_algo}. From this discussion we can say that capacity of path $P_{j + 1}$ returned by Algorithm \ref{main_algo} is $C_{j + 1}$.
\end{proof}

\begin{theorem}\label{analysis_main_thm}
\textit{The evacuation time of the solution given by Algorithm \ref{main_algo} is at most as that of the CCRP Algorithm for single source and single sink.}
\end{theorem}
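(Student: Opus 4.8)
The plan is to compare the two algorithms route-by-route, leaning on the fact, already secured in Lemma \ref{analysis_lemma}, that CCRP and Algorithm \ref{main_algo} discover exactly the same set of routes. Concretely, by part (3) of Lemma \ref{analysis_lemma} both algorithms return the same distinct paths $(P_1, C_1), (P_2, C_2), \ldots, (P_k, C_k)$ with $T_1 \le T_2 \le \ldots \le T_k$. The evacuation time produced by Algorithm \ref{main_algo} is, by its final step, exactly $CET(S_k)$, the combined evacuation time of the terminal route set $S_k = \{P_1, \ldots, P_k\}$; and by Observation \ref{main_obsv} the egress time of CCRP on the same instance is $T_{Evac}^{CCRP} = T_k + \epsilon - 1$, where $\epsilon$ is the number of iterations in which $P_k$ is returned. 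Thus the whole theorem reduces to the single inequality $CET(S_k) \le T_k + \epsilon - 1$.

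First I would unwind this inequality algebraically via the formula in equation (\ref{eqn:cet}). Writing $\Sigma = \sum_{i=1}^{k} C_i$, we have $CET(S_k) = \lceil (p + \sum_{i=1}^{k} C_i T_i)/\Sigma \rceil - 1$. Since $T_k + \epsilon$ is an integer, the relation $\lceil x \rceil \le T_k + \epsilon$ holds exactly when $x \le T_k + \epsilon$, so after clearing the denominator and rearranging, the target inequality is equivalent to the purely combinatorial statement
\[ p \le \sum_{i=1}^{k} C_i \, (T_k - T_i + \epsilon). \]
Hence the entire theorem boils down to showing that the right-hand side, which I will call $p'$, is at least the number of evacuees $p$.

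The remaining step is a counting argument identifying $p'$ with the population CCRP actually moves. By part (1) of Lemma \ref{analysis_lemma}, CCRP returns path $P_i$ in exactly $T_k - T_i + \epsilon$ iterations, and each such iteration evacuates $C_i$ people (its group size); summing over $i$ shows that CCRP evacuates $p' = \sum_{i=1}^{k} C_i(T_k - T_i + \epsilon)$ people in total, counting the virtual evacuees that Observation \ref{group_obsv} permits us to insert in phase $k$ so that every path is used the same number of times. Because these virtual evacuees sit on top of the genuine $p$ evacuees, we obtain $p \le p'$, which is precisely the inequality required, and the theorem follows.

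I expect the main obstacle to lie in the bookkeeping around $\epsilon$ and the virtual evacuees rather than in the algebra. One must argue carefully that the count $\sum_{i=1}^{k} C_i(T_k - T_i + \epsilon)$ genuinely upper-bounds the real population: the virtual evacuees only ever increase the total, and their introduction (justified in Observation \ref{group_obsv}) leaves the egress time $T_k + \epsilon - 1$ unchanged, so padding the instance is harmless. A secondary point worth recording is that Algorithm \ref{main_algo} can indeed route all $p$ people with latest arrival $CET(S_k)$; this is immediate, since the capacity available by time $CET(S_k)$, namely $\sum_{i=1}^{k} C_i\,(CET(S_k) - T_i + 1)$, is at least $p$ by the very definition of $CET$ in equation (\ref{eqn:cet}).
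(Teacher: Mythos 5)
Your proposal is correct and follows essentially the same route as the paper's own proof: invoke Lemma \ref{analysis_lemma} to equate the two route sets, identify the two evacuation times as $CET(S_k)$ and $T_k+\epsilon-1$, and reduce the claim to the counting inequality $\sum_{i=1}^{k}C_i(T_k-T_i+\epsilon)\ge p$. Your version is somewhat more careful than the paper's about the ceiling manipulation and the role of the virtual evacuees, but the argument is the same.
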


\begin{proof}
Let $(P_{1}, C_{1}), (P_{2}, C_{2}), \ldots, (P_{k}, C_{k})$ be distinct paths (not necessarily edge-disjoint) from $s$ to $t$ in order of their transit time (neglecting delays) discovered by CCRP. By Lemma \ref{analysis_lemma}, Algorithm \ref{main_algo} also returns the same set of paths. From Observation \ref{main_obsv}, we can say that evacuation time of CCRP is $T_{Evac}^{CCRP} = T_{k} + \epsilon - 1$. Evacuation time of Algorithm \ref{main_algo} is $CET(S_{k})$. Also from Lemma \ref{analysis_lemma}, number of people that are evacuated through $P_{i}$ is $C_{i}(T_{k}-T_{i} +\epsilon)$. As All people have been evacuated we can write
 $\sum_{i = 1}^{k}{C_{i}(T_{k}-T_{i} + \epsilon)} \geq p$, which implies $T_{Evac}^{CCRP} \geq CET(S_{k})$.
\end{proof}

\begin{theorem}\label{analysis_upperbound_thm}
Upper bound on the evacuation time given by CCRP (hence by Algorithm \ref{main_algo}) for single source single sink network is $\left \lfloor \frac{p}{k} \right \rfloor + (n - 1)\tau  - 1$, where $p$ is the number of evacuees, $n$ is the number of nodes in the graph, $\tau$ is the maximum transit time of any edge and $k$ is the number of paths used by CCRP (and Algorithm \ref{main_algo}).
\end{theorem}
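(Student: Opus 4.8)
The plan is to bound the quantity $T_{Evac}^{CCRP}=T_k+\epsilon-1$ supplied by Observation~\ref{main_obsv}, where $T_k$ is the transit time of the slowest of the $k$ used paths and $\epsilon$ is the number of times $P_k$ is discovered in the last phase. Because Theorem~\ref{analysis_main_thm} already shows that Algorithm~\ref{main_algo} never finishes after CCRP, an upper bound on $T_{Evac}^{CCRP}$ is automatically an upper bound for Algorithm~\ref{main_algo}, so it is enough to control this single expression. The two summands $T_k$ and $\epsilon$ are essentially independent, and I would aim for $T_k$ to produce the $(n-1)\tau$ term and $\epsilon$ to produce the $p/k$ term.

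The transit-time half is immediate. Each $P_i$ is a simple $s$-$t$ path, so it meets at most the $n$ vertices of $G$ and hence uses at most $n-1$ edges; since every edge has transit time at most $\tau$, we get $T_i\le (n-1)\tau$ for all $i$, and in particular $T_k\le (n-1)\tau$.

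For $\epsilon$ I would look at CCRP one time unit before it terminates, namely at time $T_k+\epsilon-2$. Since CCRP has not yet stopped, strictly fewer than $p$ evacuees have arrived at $t$ by then (invoking the virtual-evacuee normalisation of Observation~\ref{group_obsv} so that every path carries the same group size in all of its iterations). By Lemma~\ref{analysis_lemma}, path $P_i$ has been used $T_k-T_i+\epsilon-1$ times by that moment, each use delivering $C_i$ people, so $\sum_{i=1}^{k} C_i\,(T_k-T_i+\epsilon-1)<p$. Discarding the nonnegative terms $C_i(T_k-T_i)$ and using $C_i\ge 1$ gives $(\epsilon-1)\sum_{i}C_i<p$ and hence $(\epsilon-1)k<p$; with $\epsilon$ an integer this yields $\epsilon\le\lceil p/k\rceil$. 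Substituting the two bounds into $T_{Evac}^{CCRP}=T_k+\epsilon-1$ produces $T_{Evac}^{CCRP}\le (n-1)\tau+\lceil p/k\rceil-1$.

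The step I expect to be the real obstacle is matching the stated constant exactly, that is, replacing my $\lceil p/k\rceil$ by $\lfloor p/k\rfloor$. The two agree precisely when $k\mid p$, so the estimate above already proves the theorem in that case; when $k\nmid p$ the floor bound is essentially tight, and any remaining slack has to be extracted either from the dropped terms $\sum_i C_i(T_k-T_i)$ (nonzero as soon as the paths differ in transit time) or from some path with $C_i\ge 2$ (so that $\sum_i C_i>k$ strictly). I would therefore concentrate the careful bookkeeping on the rounding of $\epsilon$, since the $(n-1)\tau$ contribution needs nothing beyond the elementary path-length bound.
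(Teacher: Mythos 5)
Your overall route is the same as the paper's: write $T_{Evac}^{CCRP}=T_k+\epsilon-1$, bound $T_k\le (n-1)\tau$ by the simple-path argument, and bound $\epsilon$ by a counting argument; Theorem~\ref{analysis_main_thm} then transfers the bound to Algorithm~\ref{main_algo}. But the gap you flag at the end is a real one, and as written your proof establishes only $\epsilon\le\lceil p/k\rceil$, which is weaker than the theorem by one time unit whenever $k\nmid p$. The slack does not come from the dropped terms $\sum_i C_i(T_k-T_i)$ or from some $C_i\ge 2$ (both can vanish: take $k$ paths of equal transit time and unit capacity), so neither of the escape routes you sketch in the last paragraph can close it in general.

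What closes it is a different choice of what to count. Instead of counting \emph{people who have arrived by time} $T_k+\epsilon-2$, count \emph{iterations of CCRP}. By part (1) of Lemma~\ref{analysis_lemma}, path $P_i$ is returned in $T_k-T_i+\epsilon$ iterations in total, so CCRP performs $\sum_{i=1}^{k}(T_k-T_i+\epsilon)$ iterations altogether; since every iteration evacuates at least one person and there are $p$ people, this sum is at most $p$. Dropping the nonnegative terms $T_k-T_i$ gives $k\epsilon\le p$, hence $\epsilon\le\lfloor p/k\rfloor$ because $\epsilon$ is an integer, and the theorem follows. In your notation, the point is to apply $C_i\ge 1$ to all $\epsilon$ uses of each path rather than to only $\epsilon-1$ of them; those extra $k$ iterations are exactly the difference between $\lceil p/k\rceil$ and $\lfloor p/k\rfloor$. (One caveat you are implicitly wrestling with: this count is legitimate only if all $\sum_i(T_k-T_i+\epsilon)$ iterations are real, i.e.\ not padded with the virtual evacuees of Observation~\ref{group_obsv}; the paper's proof silently assumes this, and your time-$(T_k+\epsilon-2)$ variant is the safe version of the argument at the cost of the weaker constant.)
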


\begin{proof}
From Lemma \ref{analysis_lemma}, number of iterations executed by CCRP is $\sum_{i=1}^{k}(T_{k} - T_{i} + \epsilon) \le p$, as in each iteration at least one person will be evacuated. Hence, $T_{Evac}^{CCRP} \leq \left \lfloor \frac{p}{k}\right \rfloor + (n-1)\tau  - 1$.
\end{proof}

\section{Randomized Behavior Model of People}
\par The idea of combined evacuation time \cite{min2011evacuation} can be extended by considering probabilistic behavior of people. Suppose in an evacuation, people  do not follow the paths suggested by Algorithm \ref{main_algo} (or CCRP). Let's say with probability $\alpha>0$ a person follows suggested path and with probability $1-\alpha$ he follows the shortest path (to the nearest exit). In this situation, we have to redistribute people via various paths. 
If we suggest $x_{i}$ persons via $P_{i},i \neq 1$, then the number of persons who will follow $P_{i}$ and $P_{1}$ is $\alpha x_{i}$ and $(1-\alpha)x_{i}$ respectively (in expectation). The total number of people following $P_{1}$ and $P_{i}$ are $x_{1} + \sum_{i=2}^{k}(1-\alpha)x_{i}$ and $\alpha x_{i},i\neq 1$ respectively. 
Expected time at which the last person will arrive at destination via $P_{1}$ is $T_{1} + \frac{ x_{1} +\sum_{i = 2}^{k}(1-\alpha)x_{i}}{C_{1}}-1 $. Expected time at which last person will arrive at destination via $P_{i}$ is $T_{i} + \frac{\alpha x_{i}}{C_{i}} -1, i\neq 1$\\
Let the expected evacuation time in this scenario be $E[T]$. Now we can write,
\[E[T] = \max\left(T_{1} + \frac{(1-\alpha)n}{C_{1}}-1, \max_{2 \leq i \leq k}\left(T_{i} + \frac{\alpha x_{i}}{C_{i}} - 1\right)\right).\]
$E[T]$ will be minimum when it satisfies the following equation,
\begin{align}
E[T] &= T_{1} + \frac{x_{1}+\sum_{i = 2}^{k}(1-\alpha)x_{i}}{C_{1}}- 1 \nonumber \\
&= T_{i} + \frac{\alpha x_{i}}{C_{i}} - 1, 2 \le i \le k \label{eqn:min-et}.
\end{align}
where $\sum_{i = 1}^{k}x_{i} = n$ and $x_{i} \geq 0, \forall i$.
Solving the above equations we get,
 \begin{equation}
 E[T] = \frac{n + \sum_{i = 1}^{k}{C_{i}T_{i}}}{\sum_{i = 1}^{k}C_{i}} - 1 = CET(\{P_{1},P_{2},\ldots,P_{k}\}) \label{eqn:et}
 \end{equation}
  Expected evacuation time given by equation (\ref{eqn:et}) doesn't depend on $\alpha$. This is true and solution is feasible as long as $x_{1} \geq 0$. But it is not always the case, specifically when $(1 - \alpha)\sum_{i = 2}^{k}x_{i} > C_{1}(T  - T_{1} + 1)$. So, implicitly evacuation time is dependent on $\alpha$.
\par   In the following sections we give the algorithm that considers the randomized behavior of people along with analysis for expected evacuation time.

\subsection{Lower bound for expected evacuation time}
 On expectation $x_{1} + (1- \alpha) \sum_{i = 2}^{k}x_{i} = \alpha x_{1} + (1-\alpha)n$ number of  people will be evacuated via path $P_{1}$. This is minimum when $x_{1} = 0$ as $x_{1} \geq 0$. So, lower bound for expected evacuation time is $T_{1} + \frac{(1-\alpha)n}{C_{1}} - 1$.

\subsection{Algorithm for randomized behavior of people}
\textbf{Algorithm $2$:}
\begin{enumerate}
\item Run Algorithm \ref{main_algo}. Find CET and $x_{1},x_{2},\ldots,x_{k}$ using Equation (\ref{eqn:min-et}).
\item If $x_{1} \geq 0$ then quit; else go to step 3. In this case, the expected evacuation time = CET.
\item Assign $x'_{1}$ to $0$ and $x'_{i} = \frac{nx_{i}}{\sum_{j = 2}^{k}x_{j}}, \forall i \neq 1$. In this case, the expected evacuation time = $T_{1} + \frac{(1-\alpha)n}{C_{1}} - 1$.
\end{enumerate}

\begin{lemma}\label{random_lemma_1}
\textit{$x'_{i} < x_{i}$, $\forall i \neq 1$, and $\sum_{i=2}^{k}x'_{i} = n$.}
\end{lemma}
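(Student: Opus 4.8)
The plan is to treat the two assertions separately; both reduce to elementary algebra once the sign information hidden in the branch condition of Algorithm $2$ is made explicit. The first fact I would extract is that step $3$ is reached \emph{only} when the value $x_1$ computed in step $1$ is strictly negative, since step $2$ terminates whenever $x_1 \geq 0$. The second fact I would establish up front is that every remaining weight $x_2,\ldots,x_k$ from Equation (\ref{eqn:min-et}) is strictly positive. Rearranging the $i$-th balance equation gives $\alpha x_i = C_i\bigl(E[T] - T_i + 1\bigr)$ with $E[T] = CET(S_k)$; since $CET(S_k) \geq T_k \geq T_i$ by the monotonicity of Lemma \ref{lemma:edge_disjoint_2} (which applies to the virtually edge-disjoint paths returned by Algorithm \ref{main_algo}), the factor $E[T] - T_i + 1 \geq 1$ is positive, and with $\alpha > 0$, $C_i > 0$ this forces $x_i > 0$ for every $i \neq 1$.

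For the summation identity I would simply substitute the definition $x'_i = \frac{n x_i}{\sum_{j=2}^{k} x_j}$ and pull the constant $\frac{n}{\sum_{j=2}^{k} x_j}$ out of $\sum_{i=2}^{k} x'_i$. The surviving factor $\sum_{i=2}^{k} x_i$ then cancels the denominator, leaving exactly $n$. This step is purely formal and needs only $\sum_{j=2}^{k} x_j \neq 0$, which is guaranteed by the strict positivity noted above.

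For the strict inequality, the crux is to show that the common scaling factor $\frac{n}{\sum_{j=2}^{k} x_j}$ is strictly below $1$. Combining the feasibility constraint $\sum_{i=1}^{k} x_i = n$ with $x_1 < 0$ yields $\sum_{j=2}^{k} x_j = n - x_1 > n$, so the factor is indeed less than $1$. Multiplying this strict bound by $x_i > 0$ then gives $x'_i = \frac{n}{\sum_{j=2}^{k} x_j}\, x_i < x_i$ for each $i \neq 1$, as claimed.

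The cancellation and the final multiplication are routine; the one place that genuinely deserves care — and what I regard as the main obstacle — is justifying the \emph{strict} positivity of each $x_i$ with $i \neq 1$. Without it the conclusion $x'_i < x_i$ would weaken to equality whenever some $x_i = 0$. This is precisely where the structural guarantee $CET(S_k) \geq T_i$ from the path-selection analysis is indispensable, rather than merely the nonnegativity $x_i \geq 0$ attached to Equation (\ref{eqn:min-et}).
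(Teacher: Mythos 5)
Your proposal is correct and follows the only natural route here; the paper itself dismisses this lemma with ``Directly follows from the algorithm,'' so there is no competing argument to contrast with. You supply exactly the details that terse remark hides: that step $3$ is entered only when $x_1 < 0$, so $\sum_{j=2}^{k} x_j = n - x_1 > n$ makes the scaling factor strictly less than $1$, and that each $x_i$ with $i \neq 1$ is strictly positive because $\alpha x_i = C_i\,(E[T] - T_i + 1) > 0$, using $E[T] = CET(S_k) \geq T_i$ from the path-addition invariant --- the one genuinely non-routine step, which you correctly identify as the crux.
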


\begin{proof}
Directly follows from the algorithm.
\end{proof}

\begin{lemma}\label{random_lemma_2}
\textit{Above algorithm has a expected evacuation time of $CET(\{P_{1},P_{2},\ldots,P_{k}\})$ when it quits from step-$2$.}
\end{lemma}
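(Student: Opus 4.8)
The plan is to substitute the distribution computed in step~1 back into the expression for $E[T]$ and verify that it collapses to $CET$. First I would recall the bookkeeping of the behavior model: if $x_i$ persons are suggested path $P_i$, then in expectation $x_1 + (1-\alpha)\sum_{i=2}^k x_i$ persons end up travelling on $P_1$ and $\alpha x_i$ persons on each $P_i$ with $i \neq 1$. These counts sum to $x_1 + \sum_{i=2}^k x_i = n$, so no evacuee is lost, and the expected evacuation time is the largest per-path completion time,
\[
E[T] = \max\!\left( T_1 + \frac{x_1 + (1-\alpha)\sum_{i=2}^k x_i}{C_1} - 1,\ \max_{2 \le i \le k}\Big( T_i + \frac{\alpha x_i}{C_i} - 1 \Big) \right).
\]

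Next I would observe that step~1 takes $x_1,\dots,x_k$ to be the solution of the balancing system~(\ref{eqn:min-et}), which forces every one of the $k$ completion times above to a single common value; the outer maximum then collapses to that value, so $E[T]$ equals the shared completion time. Solving~(\ref{eqn:min-et}) together with $\sum_i x_i = n$ is exactly the computation that produced~(\ref{eqn:et}), and it yields $E[T] = \frac{n + \sum_{i=1}^k C_i T_i}{\sum_{i=1}^k C_i} - 1 = CET(\{P_1,\dots,P_k\})$.

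The one point that needs care is feasibility: the closed form~(\ref{eqn:et}) is the genuine expected evacuation time only when the computed distribution is admissible, i.e. $x_i \ge 0$ for all $i$. This is precisely the condition tested in step~2, and quitting there supplies $x_1 \ge 0$. For the remaining indices, rearranging~(\ref{eqn:min-et}) gives $x_i = \frac{C_i\,(CET - T_i + 1)}{\alpha}$; since the paths are discovered in nondecreasing order of transit time and $T_i \le T_k \le CET(S_k) = CET$ by Lemma~\ref{lemma:edge_disjoint_2} (and its analogue for Algorithm~\ref{main_algo}), each such $x_i > 0$ automatically. Hence the entire distribution is feasible, the common completion time is genuinely attained on every path, and the expected evacuation time equals $CET(\{P_1,\dots,P_k\})$, as claimed.

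I expect the feasibility check for the indices $i \neq 1$ to be the only nontrivial step: it must lean on the ordering $T_i \le CET$ coming from the SSEP construction rather than being taken for granted, since a negative $x_i$ would invalidate the closed form~(\ref{eqn:et}). It is exactly the possible failure of $x_1 \ge 0$, and not of the other $x_i$, that forces the branch into step~3, which is precisely why this lemma is scoped to the step~2 exit.
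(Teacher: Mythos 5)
Your proof is correct and follows essentially the same route as the paper's: verify that the distribution from step~1 equalizes all per-path completion times so that $E[T]$ collapses to the closed form $CET$, and confirm feasibility via $x_1 \ge 0$ (the step-2 test) together with $x_i \ge 0$ for $i \neq 1$ read off from equation~(\ref{eqn:min-et}). Your elaboration of why $x_i > 0$ for $i \neq 1$ (via $x_i = C_i(CET - T_i + 1)/\alpha$ and $T_i \le CET$) is a welcome expansion of a point the paper merely asserts.
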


\begin{proof}
In this case $x_{1} \geq 0$. From the equation-4 also we can observe that $x_{i} \geq 0, \forall i \neq 1$. Hence the solution is feasible. So, we can safely say that the expected evacuation time is $CET$. 
\end{proof}

\begin{lemma}\label{random_lemma_3}
\textit{Above algorithm has a expected evacuation time of $T_{1} + \frac{(1-\alpha)n}{C_{1}} - 1$ when it quits from step-$3$.}
\end{lemma}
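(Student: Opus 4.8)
The plan is to show that, once the algorithm has branched into step~3, the reallocation $x'_1 = 0$, $x'_i = \frac{n\,x_i}{\sum_{j=2}^{k}x_j}$ for $i \neq 1$ turns $P_1$ into the bottleneck path, so that the outer maximum in the expression for $E[T]$ is attained exactly at the $P_1$ term, whose value is the claimed $T_1 + \frac{(1-\alpha)n}{C_1} - 1$. So I would split the work into two parts: first evaluate the $P_1$ term under the new allocation, then argue that it dominates all the $P_i$ terms with $i \neq 1$.

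First I would make the step-3 condition explicit: the algorithm reaches step~3 precisely when the balanced solution of equation~(\ref{eqn:min-et}) produces $x_1 < 0$, which (using $\sum_{i=1}^{k} x_i = n$) is equivalent to $\sum_{i=2}^{k} x_i > n$. For the new allocation, Lemma~\ref{random_lemma_1} gives $\sum_{i=2}^{k} x'_i = n$, and since $x'_1 = 0$, the expected number of people who actually traverse $P_1$ is $x'_1 + (1-\alpha)\sum_{i=2}^{k} x'_i = (1-\alpha)n$. Hence the expected arrival time of the last person along $P_1$ is exactly $T_1 + \frac{(1-\alpha)n}{C_1} - 1$. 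This pins down the first argument of the outer maximum; what remains is to show it dominates.

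For the paths $P_i$ with $i \neq 1$, I would invoke $x'_i < x_i$ (Lemma~\ref{random_lemma_1}), which is legitimate in step~3 since $\tfrac{n}{\sum_{j=2}^{k}x_j} < 1$ there. This gives $T_i + \frac{\alpha x'_i}{C_i} - 1 < T_i + \frac{\alpha x_i}{C_i} - 1 = CET$, the equality being the balance condition~(\ref{eqn:min-et}) satisfied by the original $x_i$. Thus every $P_i$ term sits strictly below $CET$. The crux is then to check that the new $P_1$ term is at least $CET$: substituting $\sum_{i=2}^{k} x_i = n - x_1$ into the original $P_1$ balance $CET = T_1 + \frac{x_1 + (1-\alpha)\sum_{i=2}^{k}x_i}{C_1} - 1$ and subtracting it from $T_1 + \frac{(1-\alpha)n}{C_1} - 1$, the difference collapses to $-\frac{\alpha x_1}{C_1}$, which is strictly positive exactly because $x_1 < 0$ in step~3.

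Combining the two halves, the new $P_1$ term exceeds $CET$ while every $P_i$ term with $i \neq 1$ lies below $CET$, so the outer maximum equals the $P_1$ term $T_1 + \frac{(1-\alpha)n}{C_1} - 1$, as claimed. I expect the single delicate point to be the algebraic cancellation in the last step (where the dependence on $\sum_{i\geq 2} x_i$ cleanly reduces to $-\alpha x_1$); once the step-3 condition $x_1 < 0$ is written out, the rest is routine bookkeeping built on Lemma~\ref{random_lemma_1} and the balance equation~(\ref{eqn:min-et}).
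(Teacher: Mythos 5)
Your proposal is correct and follows essentially the same route as the paper: show each $P_i$ term ($i \neq 1$) drops below $CET$ via $x'_i < x_i$ and the balance equation, while the new $P_1$ term rises above $CET$, so the maximum is attained at $P_1$. In fact you supply the one step the paper merely asserts --- the inequality $T_1 + \frac{(1-\alpha)n}{C_1} - 1 > CET$, which you verify by the cancellation to $-\frac{\alpha x_1}{C_1} > 0$ using $x_1 < 0$ --- so your write-up is a strictly more complete version of the paper's argument.
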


\begin{proof}
In this case $x_{1} < 0$ and by Lemma \ref{random_lemma_1} $x'_{i} < x_{i}, i \neq 1$. For $i \neq 1$ $x'_{i}$ number of people are suggested path $P_{i}$. Hence $T_{i} + \frac{px'_{i}}{C_{i}} -1 < T_{i} + \frac{px_{i}}{C_{i}} -1 < CET$ , $i \neq 1$ and $T_{1} + \frac{(1-\alpha)n}{C_{1}} - 1 > CET$.
\end{proof}

\begin{theorem}\label{random_main_lemma}
\textit{In a single source single sink evacuation problem, if people follow the path suggested by Algorithm $2$ with probability $\alpha$, then the expected evacuation time is $\max(CET,T_{1} + \frac{(1-\alpha)n}{C_{1}}-1)$ and algorithm runs in $O(\min(n, p) \cdot n \log n)$ time.}
\end{theorem}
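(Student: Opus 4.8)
The plan is to establish the two assertions of the theorem separately: the closed form for the expected evacuation time, and the running-time bound. The whole first half rests on a single rewriting of $CET$, so I would derive that identity before splitting into cases. By Equation (\ref{eqn:min-et}) together with $\sum_{i=1}^{k} x_i = n$, the common value can be written as $CET = T_1 + \frac{\alpha x_1 + (1-\alpha)n}{C_1} - 1$. This identity is what couples the two branches of Algorithm $2$ to the two terms of the maximum.

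For the evacuation-time claim I would argue by a case split on the sign of $x_1$, which is precisely the branch condition of Algorithm $2$. First, suppose $x_1 \geq 0$, so the algorithm terminates at step $2$; by Lemma \ref{random_lemma_2} the expected evacuation time equals $CET$. Since $\alpha x_1 \geq 0$, the identity above yields $CET \geq T_1 + \frac{(1-\alpha)n}{C_1} - 1$, hence $\max\!\left(CET,\, T_1 + \frac{(1-\alpha)n}{C_1} - 1\right) = CET$. Second, suppose $x_1 < 0$, so the algorithm terminates at step $3$; by Lemma \ref{random_lemma_3} the expected evacuation time is $T_1 + \frac{(1-\alpha)n}{C_1} - 1$, and the same identity now gives $CET < T_1 + \frac{(1-\alpha)n}{C_1} - 1$ because $\alpha x_1 < 0$, so the maximum equals $T_1 + \frac{(1-\alpha)n}{C_1} - 1$. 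In both branches the expected evacuation time coincides with $\max\!\left(CET,\, T_1 + \frac{(1-\alpha)n}{C_1} - 1\right)$, which finishes this half.

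For the running time, I would observe that step $1$ of Algorithm $2$ is a single invocation of Algorithm \ref{main_algo}, which by the running-time analysis of that algorithm costs $O(\min(n,p)\, n \log n)$ under the assumption $m = O(n)$. Steps $2$ and $3$ only solve the linear system (\ref{eqn:min-et}), test the sign of $x_1$, and possibly rescale the values $x_2, \ldots, x_k$; since $k \leq \min(m+n,p) = O(\min(n,p))$, this is $O(\min(n,p))$ arithmetic, dominated by step $1$. Hence the total running time is $O(\min(n,p)\, n \log n)$.

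The main obstacle is the first half, and within it the key step is recognizing that the branch condition $x_1 \geq 0$ is exactly the condition under which $CET$ dominates the lower bound $T_1 + \frac{(1-\alpha)n}{C_1} - 1$. Once the rewritten form of $CET$ is in hand, both directions of the case analysis reduce to the sign of the single term $\alpha x_1$, so the argument is short; the running-time half is routine bookkeeping.
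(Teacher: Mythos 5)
Your proposal is correct and follows exactly the route the paper intends: the paper omits an explicit proof of this theorem, but it is meant to follow by combining Lemma \ref{random_lemma_2} and Lemma \ref{random_lemma_3} via the case split on the sign of $x_1$ (the branch condition of Algorithm $2$) together with the running-time analysis of Algorithm \ref{main_algo}, which is precisely what you do. Your explicit identity $CET = T_1 + \frac{\alpha x_1 + (1-\alpha)n}{C_1} - 1$ is a nice addition that makes transparent why the branch condition $x_1 \geq 0$ coincides with which term of the maximum dominates, a point the paper leaves implicit.
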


\begin{table*}[ht]
\tiny
\caption{Comparison of evacuation time and run time of SSEP and CCRP algorithms}
\begin{center}
\begin{tabular}{|c|c|c|c|c|c|c|c|}
\hline
\textbf{Number of Nodes} & \textbf{Number of Evacuees} & \multicolumn{2}{|c|}{\textbf{Evacuation Time}} & \multicolumn{2}{|c|}{\textbf{Run Time}} & \multicolumn{2}{|c|}{\textbf{Improvement in SSEP over CCRP $\left(\frac{CCRP}{SSEP}\right)$}} \\
\cline{3-8}
$(n)$ & $(p)$ & \textsc{SSEP} & \textsc{CCRP} & \textsc{SSEP} & \textsc{CCRP} & \textsc{Evacuation Time} & \textsc{Run Time} \\
\hline
100	&	3000	&	68	&	69	&	0.124	&	1.326	&	1.01	&	10.69	\\
500	&	5000	&	130	&	130	&	0.358	&	2.73	&	1.00	&	7.63	\\
1000	&	7000	&	155	&	156	&	1.014	&	14.586	&	1.01	&	14.38	\\
1500	&	9000	&	115	&	117	&	1.466	&	35.443	&	1.02	&	24.18	\\
2000	&	15000	&	661	&	661	&	1.622	&	29.016	&	1.00	&	17.89	\\
2500	&	25000	&	179	&	186	&	2.761	&	25.739	&	1.04	&	9.32	\\
5000	&	40000	&	903	&	903	&	3.899	&	93.521	&	1.00	&	23.99	\\
10000	&	65000	&	517	&	520	&	12.012	&	231.535	&	1.01	&	19.28	\\
15000	&	95000	&	1848	&	1853	&	14.025	&	336.946	&	1.00	&	24.02	\\
25000	&	100000	&	1126	&	1128	&	23.134	&	815.682	&	1.00	&	35.26	\\
50000	&	120000	&	1436	&	1446	&	46.69	&	1684.217	&	1.01	&	36.07	\\
100000	&	110000	&	1032	&	1044	&	93.4952	&	3016.3005	&	1.01	&	32.26	\\
500000	&	100000	&	1698	&	1720	&	344.341	&	11363.253	&	1.01	&	33.00	\\
\hline
\end{tabular}
\end{center}
\label{tab:experiments}
\end{table*}

\section{Experimental Results}
\subsection{Details of the Experiments}
We executed the SSEP and CCRP algorithms on a Dell Precision T7600 server having an Intel Xeon E5-2687W CPU running at 3.1 GHz with 8 cores (16 logical processors) and 128 GB RAM. The operating system is Microsoft Windows 7 Professional 64-bit edition. We used the C/C++ network analysis libraries \emph{igraph} and \emph{LEMON} to implement the algorithms. We used \emph{netgen} to generate synthetic graphs. The number of vertices in the graph varies from 100 to 500,000. The number of people varies from 3,000 to 120,000. The results are shown in Table \ref{tab:experiments}. The graphs are plotted on a log-log scale.

\begin{figure}
\begin{center}
\includegraphics[scale=0.5]{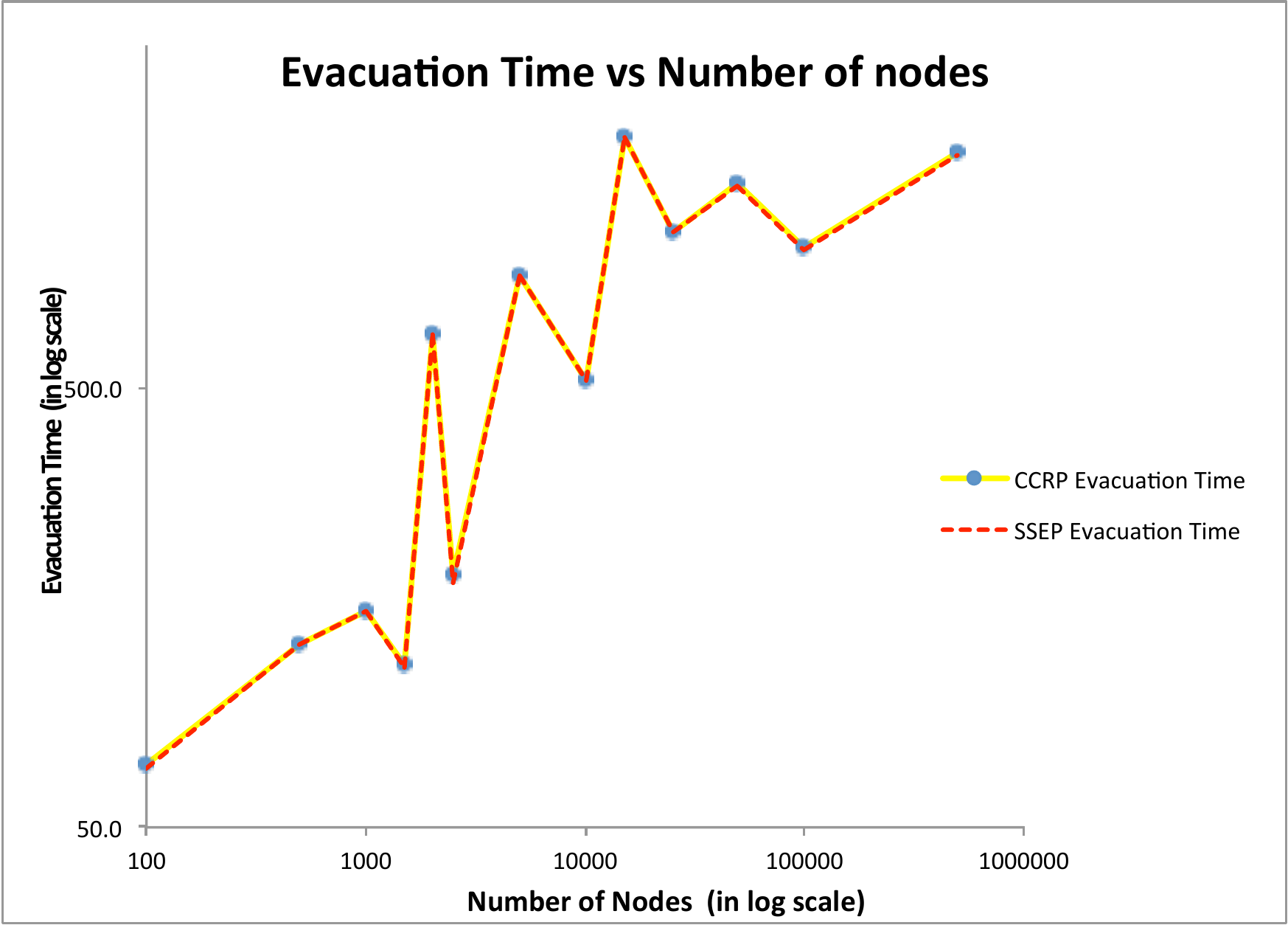}
\caption{Evacuation time vs number of nodes for SSEP and CCRP.}
\label{evacuation-time}
\end{center}
\end{figure}

\begin{figure}
\begin{center}
\includegraphics[scale=0.5]{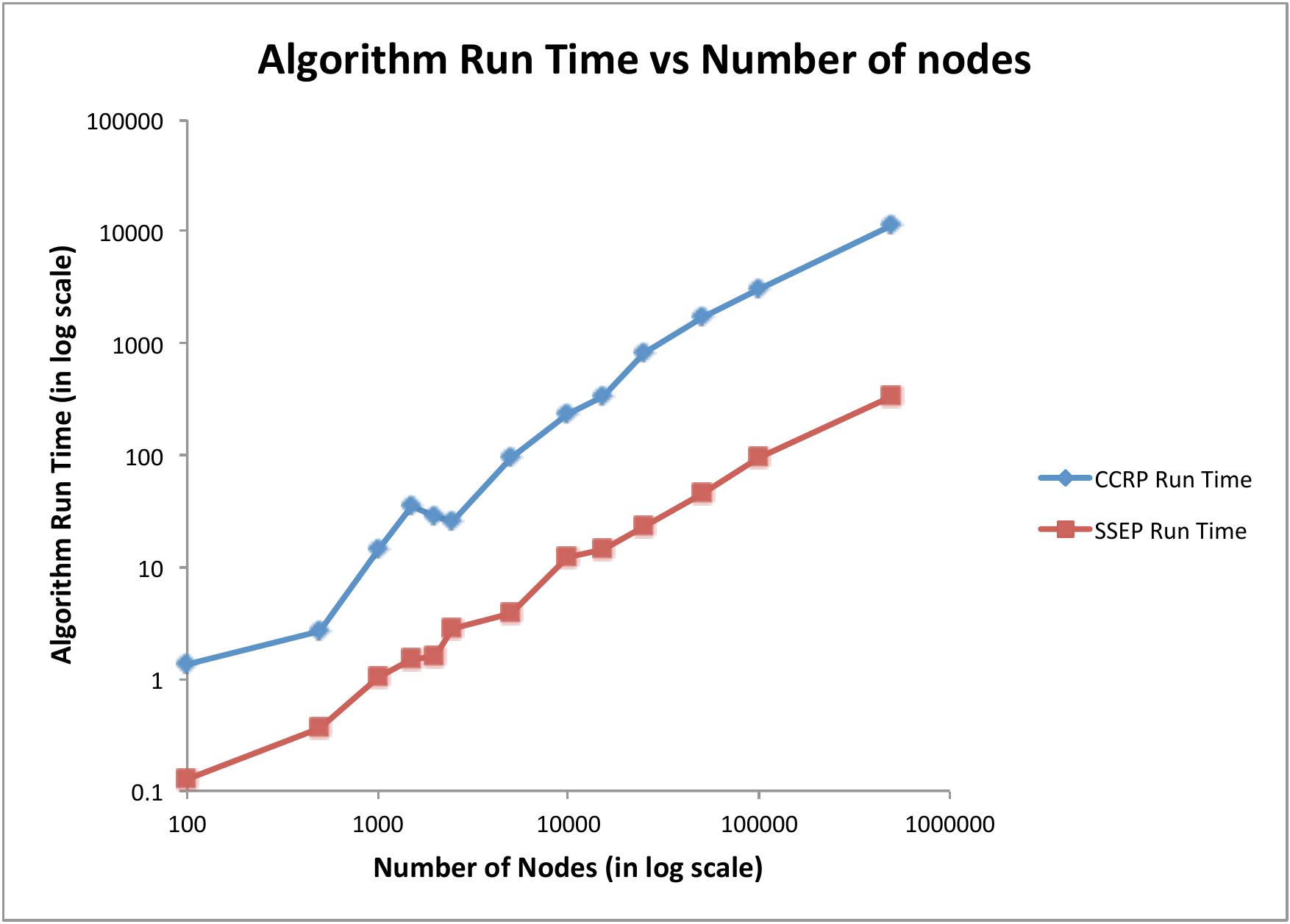}
\caption{Run time vs number of nodes for SSEP and CCRP.}
\label{run-time}
\end{center}
\end{figure}

\subsection{Results}
We show the variation of evacuation time and run time with number of nodes for SSEP and CCRP algorithms in Figure \ref{evacuation-time} and Figure \ref{run-time} respectively. From Figure \ref{evacuation-time}, we can see that the evacuation time of SSEP is at most that of CCRP. It is evident from Figure \ref{run-time} that the running time of SSEP is much lower than that of CCRP. Hence, for all these instances SSEP clearly outperforms CCRP with respect to both evacuation time and run time. The absolute and relative amount by which SSEP performs better than CCRP is shown in Table \ref{tab:experiments}. 

\section{Conclusion and Future Work}
In this paper, we have studied the evacuation route planning problem and given an improved algorithm for the single source single sink case. We theoretically showed that the SSEP algorithm performs better than the CCRP algorithm, both in terms of evacuation time and run time. This is also demonstrated by extensive experiments. We also analyzed a simple probabilistic behavior model of people. Here are some open problems which we would like to work in future.
\begin{itemize}
	\item Design a system for real time monitoring of evacuation in a building using our indoor localization app \cite{ahmed2015smartevactrak}.
	\item Extend this algorithm to the multiple source multiple sink case, and compare it's performance with CCRP and other algorithms.
	\item Develop a more sophisticated probabilistic behavior model of people for the case when they don't follow the routes suggested by the algorithm.
	\item Give good lower and upper bounds for the problem.
\end{itemize}

\bibliographystyle{plain}
\bibliography{evacuation}

\end{document}